\newcommand{\nc}{\newcommand}
\nc{\ben}{\begin{eqnarray}}
\nc{\een}{\end{eqnarray}}
\newcommand{\id}{\mathrm{id}}
\newcommand{\beqa}{\begin{eqnarray}}
\newcommand{\eeqa}{\end{eqnarray}}
\nc{\Z}{{\bold Z}}
\nc\cW{\cal W}
\nc\cG{\cal G}
\newcommand{\taub}{{\overline{\tau}}}
\newcommand{\Phib}{{\overline{\Phi}}}
\newtheorem{lem}{Lemma}[section]
\newtheorem{prop}{Proposition}[section]
\newtheorem{defn}{Definition}[section]
\newtheorem{thm}{Theorem}
\newtheorem{rem}{Remark}
\newcommand{\cO}{\mathscr{O}}
\newcommand{\ZZ}{\mathbb{Z}}
\newcommand{\cal}{\mathcal}
\numberwithin{equation}{section}
\begin{document}

\title[FRT and Askey-Wilson algebras]{FRT presentation of classical Askey-Wilson algebras}
\author{Pascal Baseilhac$^{*}$}
\address{$^*$ Institut Denis-Poisson CNRS/UMR 7013 - Universit\'e de Tours - Universit\'e d'Orl\'eans
Parc de Grammont, 37200 Tours, 
FRANCE}
\email{pascal.baseilhac@idpoisson.fr}

\author{Nicolas Cramp\'e$^{\dagger}$}
\address{$^\dagger$ Laboratoire Charles Coulomb (L2C), Univ Montpellier, CNRS, Montpellier, France.}
\email{nicolas.crampe@umontpellier.fr}

\begin{abstract}
Automorphisms of the infinite dimensional Onsager algebra are introduced. Certain quotients of the Onsager algebra are formulated using a polynomial in these automorphisms. In the simplest case, the quotient coincides with 
the classical analog of the Askey-Wilson algebra. In the general case, generalizations of the classical Askey-Wilson algebra
are obtained. The corresponding class of solutions of the non-standard classical Yang-Baxter
algebra are constructed, from which a generating function of elements in the commutative subalgebra is
derived. We provide also another presentation of the Onsager algebra and of the classical Askey-Wilson
algebras.
\end{abstract}

\maketitle

\vskip -0.5cm

{\small MSC:\ 81R50;\ 81R10;\ 81U15.}

{{\small  {\it \bf Keywords}: Onsager algebra;  Non-standard Yang-Baxter algebra;  Askey-Wilson algebras;  Integrable systems.}}
\vspace{0cm}

\vspace{3mm}

\section{Introduction}

The Onsager algebra is an infinite dimensional Lie algebra with three known presentations. Introduced by L. Onsager in the investigation of the exact solution of the two-dimensional Ising model \cite{Ons44}, the original presentation is given in terms of generators $\{A_n,G_m|n,m\in {\mathbb Z}\}$ and relations (see Definition \ref{def:OA}). The second presentation is given in terms
of two generators $\{A_0,A_1\}$ satisfying the so-called Dolan-Grady relations (\ref{eq:DG}) \cite{Davies}. Recently \cite{BBC}, a third presentation has been identified. It is given in terms of elements of the non-standard classical Yang-Baxter algebra (\ref{eq:Al}) with r-matrix (\ref{r12basic}).\vspace{1mm}

The Askey-Wilson algebra has been introduced in \cite{Z91}, providing an algebraic scheme for the Askey-Wilson polynomials. This algebra is connected with the double affine Hecke algebra of type $(C_1^{\vee},C_1)$ \cite{K,T12,M,KM}, the theory of Leonard pairs \cite{T87,NT,TV} and $U_q(sl_2)$ \cite{GZ,WZ}.  
A well-known presentation of the Askey-Wilson algebra\footnote{For the {\it universal} Askey-Wislon algebra introduced in \cite{T11}, a second presentation is known.} is given in terms 
of three generators satisfying the relations displayed in Definition \ref{def41}. Generalizations of the Askey-Wilson algebra is an active field of investigation. Various examples of generalizations have been considered in the literature, see for instance \cite{DGVV,GVZ,P,PW}. \vspace{1mm}

In this note, it is shown that the class of quotients of the Onsager algebra considered by Davies in \cite{Davies} generates a classical analog ($q=1$) of the Askey-Wilson algebra and generalizations of this algebra. For each quotient, classical analogs of the automorphisms recently  introduced in \cite{BK17} are used to derive explicit polynomial expressions for the generators.
Based on the results of \cite{BBC} extended to these quotients, for  the classical Askey-Wilson algebra and each of its generalization, a  presentation\ {\it  \`a la Faddeev-Reshetikhin-Takhtajan}  is given. Using this presentation, for each quotient a commutative subalgebra is identified.
To complete the analysis, we also give a new presentation of the Onsager algebra that can be understood as the specialization $q=1$ of the infinite dimensional quantum algebra ${\cal A}_q$ introduced in \cite{BK2}. In this alternative presentation, the quotients of the Onsager algebra corresponding to Davies' prescription are determined. 

\vspace{1mm}

\section{The Onsager algebra, quotients and FRT presentation}
In this section, three different presentations of the Onsager algebra $\cO$ are first reviewed, and three different automorphisms $\Phi,\tau_0,\tau_1$ of the Onsager algebra are introduced. Using these, the elements in $\cO$  are written as
simple polynomial expressions of the fundamental generators $A_0,A_1$. Then, we consider certain quotients of the Onsager algebra introduced by Davies \cite{Davies}. Each quotient is formulated using an operator written as a polynomial in the automorphisms. Given a quotient, the FRT presentation is constructed from which a generating function for mutually commuting quantites is obtained.
\subsection{The Onsager algebra} The Onsager algebra has been introduced in the context of mathematical physics \cite{Ons44}. The first presentation of this algebra which originates in Onsager's work \cite{Ons44} is now recalled. 
\begin{defn}\label{def:OA}
 The Onsager algebra $\cO$ is generated by $\{A_n,G_m|n,m \in\ZZ\}$ subject to the following relations:
 \begin{eqnarray}
 &&[A_n,A_m]=4\ G_{n-m}\;,\label{eq:OA1}\\
 &&[G_n,A_m]=2A_{n+m}-2A_{m-n}\;,\label{eq:OA2}\\
  &&[G_n,G_m]=0\label{eq:OA3}\;.
 \end{eqnarray}
\end{defn} 
\begin{rem} $\{A_n,G_m\}$ for $n\in{\mathbb Z}$ and $m\in {\mathbb Z}_{+}$ form a basis of $\cO$.  Note that $G_{-n}=-G_n$ and $G_0=0$.
\end{rem}
Note that a second presentation is given in terms of two generators $A_0,A_1$ subject to a pair of relations, the so-called
Dolan-Grady relations \cite{DG82}. They read:
 \begin{eqnarray}
[A_0,[A_0,[A_0,A_1]]]=16[A_0,A_1], \qquad [A_1,[A_1,[A_1,A_0]]]=16[A_1,A_0].\label{eq:DG}
 \end{eqnarray}
These two presentations define isomorphic Lie algebras, see \cite{Davies,a-Roan91}.\vspace{1mm}

In a recent paper \cite{BBC}, a third presentation of the Onsager algebra was proposed using the framework of the non-standard classical Yang-Baxter algebra. 
It is called a {\it FRT presentation} in honour of the authors  Faddeev-Reshetikhin-Takhtajan \cite{FRT87}.  Let us introduce the r-matrix  ($u,v$ are formal variables, sometimes called `spectral parameters' in the literature on integrable systems)
\beqa
{r}_{12}(u,v)= \frac{1}{(u-v)(uv-1)}\begin{pmatrix}
       u(1-v^2)&0&0& -2(u-v)\\
       0&-u(1-v^2)& -2v(uv-1)&0\\
       0& -2u(uv-1) & -u(1-v^2) &0\\
        -2uv(u-v)&0&0& u(1-v^2)
      \end{pmatrix}\ 
\label{r12basic}
\eeqa
solution of the non-standard classical Yang-Baxter equation
 \begin{equation}\label{eq:nsCYBE}
  [\ {r}_{13}(u_1,u_3)\ , \ {r}_{23}(u_2,u_3)\ ]=[\ {r}_{21}(u_2,u_1)\ , \ {r}_{13}(u_1,u_3)\ ]+[\ {r}_{23}(u_2,u_3)\ , \ {r}_{12}(u_1,u_2)\ ]\;,
 \end{equation}
where we denote $r_{12}(u) = r(u)\otimes I\!\! I$ , $r_{23}(u) =I\!\! I\otimes  r(u) $  and so on.

\begin{thm} \cite{BBC}\label{th1}
The non-standard classical Yang-Baxter algebra 
\begin{equation}\label{eq:Al}
 [\ B_{1}(u)\ , \ B_{2}(v)\ ]=[\ {r}_{21}(v,u)\ , \ B_{1}(u)\ ]+[\ B_{2}(v)\ , \ {r}_{12}(u,v)\ ]\; 
\end{equation}
for the r-matrix (\ref{r12basic}) and
\begin{equation}
 B(u)=\begin{pmatrix}
       {\cal G}(u) &{\cal A}^-(u)\\
       {\cal A}^+(u) & -{\cal G}(u)
      \end{pmatrix}\label{eq:BO}
\end{equation}
with 
\begin{eqnarray}
{\cal G}(u)=\sum_{n\geq 1} u^n G_{n}\ ,\quad \ {\cal A}^-(u)=\sum_{n\geq 0} u^n A_{-n}
 \ ,\quad \ {\cal A}^+(u)=\sum_{n\geq 1} u^n A_{n}\; ,\label{eq:cu}
\end{eqnarray}
provides an FRT presentation of the Onsager algebra.
\end{thm}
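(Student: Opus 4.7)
The plan is to verify the matrix identity (\ref{eq:Al}) by decomposing both sides into a short list of scalar identities among the generating functions $\mathcal{G}(u), \mathcal{A}^\pm(u)$, and matching coefficients of $u^n v^m$ with the Onsager defining relations (\ref{eq:OA1})--(\ref{eq:OA3}).

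First, I would use the explicit form (\ref{eq:BO}) of $B(u)$ together with the structure of $r_{12}(u,v)$ (which has nonzero entries only on the main diagonal and main anti-diagonal) to reduce the $4\times 4$ matrix identity to independent scalar identities, one for each of the commutators $[\mathcal{G}(u),\mathcal{G}(v)]$, $[\mathcal{G}(u),\mathcal{A}^\pm(v)]$, $[\mathcal{A}^\pm(u),\mathcal{A}^\pm(v)]$ and $[\mathcal{A}^\pm(u),\mathcal{A}^\mp(v)]$. Tracelessness of $B(u)$ removes one redundancy, and the diagonal/anti-diagonal sectors of the r-matrix decouple in the output, so only a handful of inequivalent identities must be checked.

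Next, I would simplify the r-matrix entries, noting that the off-diagonal coefficients collapse to expressions such as $-2/(uv-1)$ and $-2v/(u-v)$, while the diagonal factor $u(1-v^2)/((u-v)(uv-1))$ admits a clean partial-fraction split into pieces proportional to $1/(uv-1)$ and $1/(u-v)$. Expanding these two elementary rational functions as formal power series in the domain $|v|<|u|<1$ by $1/(uv-1)=-\sum_{k\geq 0}(uv)^k$ and $1/(u-v)=\sum_{k\geq 0}u^{-k-1}v^k$, and substituting the series (\ref{eq:cu}) for $\mathcal{G}, \mathcal{A}^\pm$, both sides of (\ref{eq:Al}) become doubly indexed series whose coefficients of $u^n v^m$ can be compared directly. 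The geometric factor $(uv)^k$ is precisely what generates the shift $n\mapsto n+m$ appearing in (\ref{eq:OA2}), while the expansion of $1/(u-v)$ produces the complementary shift $n\mapsto m-n$, so that the $\mathcal{G}\text{-}\mathcal{A}^\pm$ sector yields $[G_n,A_m]=2A_{n+m}-2A_{m-n}$; the analogous computation for the $\mathcal{A}^\pm$-block should produce $4\,G_{n-m}$, and the $\mathcal{G}$-block should collapse to $0$.

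The main obstacle is the bookkeeping of the region-dependent expansions of $1/(u-v)$ and $1/(uv-1)$, together with the need to distribute the resulting terms consistently between the positive and negative index ranges carried by $\mathcal{A}^+$ and $\mathcal{A}^-$. A convenient device is to multiply (\ref{eq:Al}) through by $(u-v)(uv-1)$ from the outset, reducing it to a polynomial identity in $u,v$ whose coefficient-wise equivalence with (\ref{eq:OA1})--(\ref{eq:OA3}) can be checked by a finite bookkeeping argument; boundary cases such as $n=0$ or $m=0$ use the conventions $G_0=0$ and $G_{-n}=-G_n$. The converse direction, deducing (\ref{eq:Al}) from (\ref{eq:OA1})--(\ref{eq:OA3}), then follows by running the same computation backwards, resumming the indices into the generating functions (\ref{eq:cu}).
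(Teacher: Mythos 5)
Your proposal is correct and follows the standard direct-verification route: the paper itself states Theorem \ref{th1} without proof (citing \cite{BBC}), but your strategy --- reduce the $4\times 4$ identity to scalar relations among the currents, clear the denominator $(u-v)(uv-1)$, and match coefficients of $u^nv^m$ against (\ref{eq:OA1})--(\ref{eq:OA3}) in both directions --- is exactly the computation carried out in \cite{BBC} and mirrors the proof the paper does give for the analogous Proposition \ref{prop20}. Your handling of the one real subtlety (the region-dependent expansions of $1/(u-v)$ and $1/(uv-1)$, resolved by multiplying through by $(u-v)(uv-1)$, which is a non-zero-divisor in the formal power series ring) is sound.
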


\subsection{Automorphisms of the Onsager algebra}
We are interested in three algebra automorphisms of $\cO$.
Let $\Phi:\cO \rightarrow \cO$ denote the algebra automorphism defined by $\Phi(A_0)=A_1$ and $\Phi(A_1)=A_0$. Observe that $\Phi^2=\id$. We now introduce two other automorphisms of $\cO$.
\begin{prop}\label{prop:aut}
 There exist two involutive algebra automorphisms $\tau_0,\tau_1:\cO\rightarrow \cO$ such that
  \beqa
     \tau_0(A_0)&=&A_0\ ,\label{eq:T00}\\ 
     \tau_0(A_1)&=&-\frac{1}{8}\left(A_1A_0^2 - 2A_0A_1A_0 + A_0^2A_1\right)
         + A_1 =-\frac{1}{8} [A_0,[A_0,A_1]]
         + A_1\ , \label{eq:T01}
  \eeqa
  \beqa
 \tau_1(A_1)&=&A_1\ ,\label{eq:T11}\\ 
     \tau_1(A_0)&=&-\frac{1}{8}\left(A_0A_1^2 - 2A_1A_0A_1 + A_1^2A_0\right) 
         + A_0=-\frac{1}{8} [A_1,[A_1,A_0]]
         + A_0\ .\label{eq:T10}
  \eeqa
\end{prop}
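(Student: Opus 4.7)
The strategy is to work in the original presentation of Definition \ref{def:OA} rather than directly in the Dolan--Grady presentation, because the action of the involutions on the full generating set $\{A_n,G_m\}$ is transparent. I would set
\begin{equation*}
\tau_0(A_n) = A_{-n},\quad \tau_0(G_n) = -G_n,\qquad \tau_1(A_n) = A_{2-n},\quad \tau_1(G_n) = -G_n,
\end{equation*}
which are manifestly involutive on generators. Preservation of the defining relations (\ref{eq:OA1})--(\ref{eq:OA3}) then reduces to the observation that the shifts $n\mapsto -n$ and $n\mapsto 2-n$ negate the difference $n-m$, which is exactly compensated by the sign flip on the $G$'s. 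For instance, for $\tau_0$ applied to (\ref{eq:OA1}),
\begin{equation*}
[A_{-n},A_{-m}] = 4G_{m-n} = -4G_{n-m} = 4\tau_0(G_{n-m});
\end{equation*}
for (\ref{eq:OA2}), $[-G_n,A_{-m}] = -(2A_{n-m}-2A_{-m-n})$ matches $\tau_0(2A_{n+m}-2A_{m-n}) = 2A_{-n-m}-2A_{n-m}$; and (\ref{eq:OA3}) is trivial. The same pattern works verbatim for $\tau_1$.

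It remains to identify these globally defined involutions with the explicit polynomial formulas (\ref{eq:T01}) and (\ref{eq:T10}) in the Dolan--Grady generators $A_0,A_1$. For $\tau_0$, using $[A_0,A_1] = 4G_{-1} = -4G_1$ together with $[G_1,A_0] = 2A_1-2A_{-1}$, I compute
\begin{equation*}
[A_0,[A_0,A_1]] = -4[A_0,G_1] = 8(A_1 - A_{-1}),
\end{equation*}
which rearranges to $A_{-1} = A_1 - \tfrac{1}{8}[A_0,[A_0,A_1]]$, so the right-hand side of (\ref{eq:T01}) indeed equals $\tau_0(A_1)$. A symmetric computation starting from $[A_1,A_0] = 4G_1$ and $[G_1,A_1] = 2A_2 - 2A_0$ produces $[A_1,[A_1,A_0]] = 8(A_0 - A_2)$, matching $\tau_1(A_0) = A_2$ with the right-hand side of (\ref{eq:T10}).

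I do not expect any serious obstacle. The only subtle point is the sign in the action on the $G_n$'s: once $\tau_0(A_n) = A_{-n}$ is imposed, relation (\ref{eq:OA1}) forces $\tau_0(G_n) = -G_n$, and one must check this choice is compatible with the mixed relation (\ref{eq:OA2}), as done above. An alternative, shorter-looking route would be to use the Dolan--Grady presentation directly and verify the two cubic DG relations for the images of $A_0,A_1$; however, the iterated brackets in (\ref{eq:T01}) and (\ref{eq:T10}) make this considerably less transparent, so I prefer the infinite-generator approach sketched here.
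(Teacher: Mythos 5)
Your proof is correct, and it takes a genuinely different route from the paper. The paper works entirely in the Dolan--Grady presentation: it defines $\tau_0$ only on $A_0,A_1$ by the polynomial formulas \eqref{eq:T00}--\eqref{eq:T01}, checks that the first relation in \eqref{eq:DG} is preserved via $[A_0,\tau_0(A_1)]=[A_1,A_0]$, handles the second via the intermediate identity $[\tau_0(A_1),[\tau_0(A_1),A_0]]=-8\,\tau_0\tau_1(A_0)+8A_0$, verifies $\tau_0^2=\id$ on the two generators, and obtains $\tau_1$ as $\Phi\circ\tau_0\circ\Phi$. You instead define $\tau_0$ and $\tau_1$ on the full Onsager generating set $\{A_n,G_m\}$ as the index reflections $n\mapsto -n$ and $n\mapsto 2-n$ combined with $G_n\mapsto -G_n$, where preservation of \eqref{eq:OA1}--\eqref{eq:OA3} and involutivity are immediate bookkeeping (your index computations check out, including the use of $G_{-n}=-G_n$, which follows from antisymmetry in \eqref{eq:OA1}), and then you translate back to the two-generator formulas via $A_{-1}=A_1-\tfrac18[A_0,[A_0,A_1]]$ and $A_2=A_0-\tfrac18[A_1,[A_1,A_0]]$. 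What your approach buys: the automorphism property requires no clever bracket identity, and it makes the later facts transparent --- Remark \ref{R1} and Proposition \ref{prop:poly} become the statement that $\tau_1\Phi$ acts as the shift $A_n\mapsto A_{n+1}$, since $\Phi(A_n)=A_{1-n}$. What the paper's approach buys: it is self-contained in the Dolan--Grady presentation and does not invoke the Davies--Roan isomorphism between the two presentations of $\cO$; your argument defines the maps in the Onsager presentation of Definition \ref{def:OA} and only needs that presentation, so there is no circularity, but a reader must accept that the proposition's formulas, stated in terms of $A_0,A_1$ alone, are realized by your globally defined maps --- which your final identification supplies. Both proofs are complete.
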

\begin{proof} Firstly, we show that $\tau_0$  leaves invariant the first relation in (\ref{eq:DG}). This follows immediately from the fact that 
\beqa
 [A_0,\tau_0(A_1)]=[A_1,A_0].\label{eq:1}
\eeqa
Secondly, we show that $\tau_0$ leaves invariant the second relation in (\ref{eq:DG}). Observe that:
\beqa
[\tau_0(A_1),[\tau_0(A_1),A_0]] = -8 \tau_0\tau_1(A_0) + 8A_0\ .
\eeqa
It follows:
\beqa
[\tau_0(A_1),[\tau_0(A_1),[\tau_0(A_1),A_0]]] = 8 \underbrace{[\tau_0\tau_1(A_0),\tau_0(A_1)]}_{= \tau_0([\tau_1(A_0),A_1])} + 8 \underbrace{[\tau_0(A_1),A_0]}_{=[A_0,A_1]}= 16[A_0,A_1].\nonumber
\eeqa
So, we conclude that $\tau_0$ leaves invariant both relations in (\ref{eq:DG}). 
\begin{eqnarray}
 &&\tau_0(\tau_0(A_0))=A_0\ ,\\
  &&\tau_0(\tau_0(A_1))=-\frac{1}{8} [A_0,[A_0,\tau_0(A_1)]]+\tau_0(A_1)=A_1\;.
\end{eqnarray}
This proves that $\tau_0$ is involutive and, by consequence is a bijection.

The same holds for $\tau_1$, using $\tau_1=\Phi \circ \tau_0 \circ \Phi$.
\end{proof}
\begin{rem}\label{R1} $(\tau_0\Phi) (\tau_1\Phi) =(\tau_1\Phi) (\tau_0\Phi) =\id$.
\end{rem}

Let us mention that the automorphisms  $\Phi,\tau_0,\tau_1$ can be viewed as the classical analogs $q=1$ of the automorphisms considered in \cite{BK17} (see also \cite{T17}). Using $\tau_0,\tau_1$ and $\Phi$, the elements of the Onsager algebra admit simple expressions as polynomials of the two fundamental generators $A_0,A_1$.
\begin{prop}\label{prop:poly} In the Onsager algebra $\cO$, one has:
\beqa
A_{m}= (\tau_1\Phi)^m(A_0)\  \quad \mbox{and}\quad G_n=\frac{1}{4}[ (\tau_1\Phi)^n(A_0),A_0] \ .\label{eq:idA}
\eeqa
\end{prop}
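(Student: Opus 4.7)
The second identity is an immediate consequence of the first: once $A_n=(\tau_1\Phi)^n(A_0)$ is in hand, relation (\ref{eq:OA1}) yields $\tfrac14[(\tau_1\Phi)^n(A_0),A_0]=\tfrac14[A_n,A_0]=G_n$. So the real content is the identification $A_m=T^m(A_0)$ for the automorphism $T:=\tau_1\Phi$, and my plan is to recognize $T$ as the \emph{index-shift} automorphism on the original Onsager basis, then invoke the Dolan-Grady presentation to finish.

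Define $\psi:\cO\to\cO$ on the basis $\{A_n,G_m\}$ from Remark~1 by $\psi(A_n)=A_{n+1}$ and $\psi(G_m)=G_m$. Checking that $\psi$ respects the defining relations (\ref{eq:OA1})--(\ref{eq:OA3}) is immediate, since those relations involve only differences of $A$-indices and leave the $G$'s fixed; hence $\psi$ extends to a Lie algebra automorphism of $\cO$. Both $\psi$ and $T$ are therefore Lie algebra automorphisms of $\cO$, and by the Dolan-Grady presentation (\ref{eq:DG}) any such automorphism is uniquely determined by its values on the pair $A_0,A_1$.

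So it suffices to verify that $T$ and $\psi$ agree on these two generators. The equality $T(A_0)=\tau_1(A_1)=A_1=\psi(A_0)$ is free from (\ref{eq:T11}). The equality $T(A_1)=\tau_1(A_0)$ is the only computational step: from (\ref{eq:OA1})--(\ref{eq:OA2}) one has $[A_1,A_0]=4G_1$ and $[G_1,A_1]=2A_2-2A_0$, so $[A_1,[A_1,A_0]]=-8A_2+8A_0$; substituting into (\ref{eq:T10}) gives $\tau_1(A_0)=A_2=\psi(A_1)$. Hence $T=\psi$, and $T^m(A_0)=\psi^m(A_0)=A_m$ for every $m\in\ZZ$. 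The only real obstacle is this single base-case commutator evaluation; everything else is formal, relying on the universal property of the Dolan-Grady presentation and the manifest compatibility of the Onsager relations with the shift $n\mapsto n+1$.
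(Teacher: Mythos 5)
Your proof is correct, and it takes a route that differs in structure from the paper's. The paper works index by index: it computes $[G_1,A_0]=2(A_1-\tau_0(A_1))$ and $[G_1,A_1]=2(\tau_1(A_0)-A_0)$, compares with relation (\ref{eq:OA2}) to get the identification for $m=-1,2$, and then propagates it by acting with powers of $\tau_1\Phi$ on these commutators, i.e.\ by an induction driven by the recursion $[G_1,A_m]=2A_{m+1}-2A_{m-1}$. You instead introduce the shift automorphism $\psi(A_n)=A_{n+1}$, $\psi(G_m)=G_m$ of the Onsager presentation and identify $\tau_1\Phi$ with $\psi$ by checking agreement on the generating pair $A_0,A_1$; the only shared computational core is the base case $\tau_1(A_0)=A_2$ (your evaluation $[A_1,[A_1,A_0]]=-8A_2+8A_0$ is exactly the paper's second displayed commutator in disguise). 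Your route buys slightly more: the equality $\tau_1\Phi=\psi$ gives at once the full action of $\tau_1\Phi$ on the basis, namely $(\tau_1\Phi)(A_n)=A_{n+1}$ and $(\tau_1\Phi)(G_m)=G_m$, facts the paper uses later (e.g.\ in deriving (\ref{dav2}) from $[(\taub_1\Phib)^p,S_N]=0$) but only obtains implicitly. The price is an appeal to the fact that $A_0,A_1$ generate $\cO$, which is exactly the Davies--Roan isomorphism between the two presentations that the paper cites; this is a legitimate external input (the paper's inductive argument leans on the same basis statement from Remark 1), so there is no gap. Both arguments are complete; yours is arguably cleaner to extend to the $G_n$ statement, which as you note is immediate from (\ref{eq:OA1}) once $A_n=(\tau_1\Phi)^n(A_0)$ is known.
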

\begin{proof}
By definition (\ref{eq:OA1}), one has  $G_1=[A_1,A_0]/4$. By Remark \ref{R1}, one has $(\tau_0\Phi)=(\tau_1\Phi)^{-1}$. According to (\ref{eq:T00})-(\ref{eq:T10}), it follows:
\beqa
[G_1,A_0]=2(A_1-\tau_0(A_1))\ , \quad  [G_1,A_1]=2(\tau_1(A_0)-A_0)\ .\label{GA}
\eeqa
Comparing  (\ref{GA}) with (\ref{eq:OA2}), we see that the identification (\ref{eq:idA}) holds for $m=-1,2$. Then, we note that $\tau_1(G_1)=-G_1$ by (\ref{eq:DG}). Acting with $(\tau_1\Phi)^k$ on (\ref{GA}), one derives  (\ref{eq:OA2}) for $n=1$. The second relation in (\ref{eq:idA}) follows from (\ref{eq:OA1}).
\end{proof}

\begin{rem}  $\Phi(A_{-n})=A_{n+1}$ and $\Phi(G_n)=-G_n$.
\end{rem}

In the FRT presentation displayed in Theorem \ref{th1}, the action of the automorphisms  is easily identified. The action of $\tau_0,\tau_1$ on the currents is such that:
\beqa
(\tau_0\Phi)({\cal A}^-(u)) &=& u^{-1}({\cal A}^-(u)-A_0)\ , \quad (\tau_0\Phi)({\cal A}^+(u)) = u({\cal A}^+(u)+A_0)\ ,\label{eq:tau-cu}\\
(\tau_1\Phi)({\cal A}^-(u)) &=& u({\cal A}^-(u)+A_1)\ , \quad \ \ \ (\tau_1\Phi)({\cal A}^+(u)) = u^{-1}({\cal A}^+(u)-A_1)\ ,\nonumber\\
(\tau_0\Phi)({\cal G}(u)) &=& (\tau_1\Phi)({\cal G}(u)) = {\cal G}(u)\ . \nonumber
\eeqa
%
%
%
%


\subsection{Quotients of the Onsager algebra} In Davies' paper on the Onsager algebra and superintegrability \cite{Davies}, Davies considers certain quotients of the Onsager algebra. Below, we characterize the relations considered by Davies in terms of an operator which is a polynomial in two automorphisms $\taub_0,\taub_1$.  As will be shown later, these quotients can be viewed as generalizations of the classical  $(q=1)$ Askey-Wilson algebra.
\begin{defn}\label{def22}  Let $\{\alpha_n|n=0,...,N\}$ be non-zero scalars with $N$ any non-zero positive integer. The algebra $\overline{\cO}_N$
 is defined as the quotient of the Onsager algebra $\cO$ by the relations
\begin{eqnarray}
&&\sum_{n=-N}^{N}\alpha_n A_{-n}=0
\quad\text{and}\qquad\sum_{n=-N}^{N}\alpha_n A_{n+1}=0
\qquad \mbox{with}\quad  \alpha_{-n}=\alpha_{n} \label{dav}\ .
\end{eqnarray}
\end{defn}

There exists an algebra homomorphism $\varphi_N: \cO \rightarrow \overline{\cO}_{N}$ that sends $A_0 \mapsto A_0$,  $A_1\mapsto A_1$.  We now introduce three automorphisms $\taub_0$, $\taub_1$ and $\Phib$ of $\overline{\cO}_{N}$ 
such that $\taub_0\varphi_N = \varphi_N \tau_0$, $\taub_1\varphi_N = \varphi_N \tau_1$ and $\Phib(A_0)=A_1$.
According to Proposition \ref{prop:poly}, introduce the operator:
\beqa
S_N = \sum_{n=-N}^{N}\alpha_n (\taub_1\Phib)^n\ .\label{eq:SN}
\eeqa
The relations in (\ref{dav}) simply read $S_N(A_0)=0$ and $S_N(A_1)=0$, respectively. 
These results allow us to give an alternative presentation of the quotients $\overline{\cO}_{N}$:
\begin{prop}\label{pro:che}
The quotient $\overline{\cO}_{N}$ is generated by $A_0$ and $A_1$ subject to the Dolan Grady relations
 \begin{eqnarray}\label{eq:DGche}
[A_0,[A_0,[A_0,A_1]]]=16[A_0,A_1]\quad \mbox{and}\quad [A_1,[A_1,[A_1,A_0]]]=16[A_1,A_0],
 \end{eqnarray}
and to the relations
\begin{eqnarray}
 S_N(A_0) =0  \quad \mbox{and}\quad S_N(A_1) =0\;,\label{eq:opdav}
\end{eqnarray}
where $S_N$ is defined by \eqref{eq:SN}.
\end{prop}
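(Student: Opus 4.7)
The plan is to reduce the claim to the direct translation, via Proposition \ref{prop:poly}, between the defining relations (\ref{dav}) of $\overline{\cO}_N$ and the relations (\ref{eq:opdav}). Starting from the Dolan--Grady presentation of $\cO$ (known by \cite{Davies,a-Roan91} to be equivalent to that of Definition \ref{def:OA}), the quotient $\overline{\cO}_N$ is presented by $A_0, A_1$ subject to (\ref{eq:DGche}) together with the images of (\ref{dav}). It therefore suffices to show that, in this presentation, those images coincide with (\ref{eq:opdav}).

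Before using Proposition \ref{prop:poly} inside the quotient, I need to verify that the Lie ideal $I \subset \cO$ generated by the two elements $R_1 = \sum_{n=-N}^{N}\alpha_n A_{-n}$ and $R_2 = \sum_{n=-N}^{N}\alpha_n A_{n+1}$ is stable under $\Phi, \tau_0, \tau_1$, so that $\taub_0, \taub_1, \Phib$ make sense on $\overline{\cO}_N$. For $\Phi$ this follows immediately from the Remark $\Phi(A_{-n}) = A_{n+1}$, which interchanges $R_1$ and $R_2$. For $\tau_1\Phi$, which by Proposition \ref{prop:poly} acts as the shift $A_m \mapsto A_{m+1}$, a re-indexing combined with $\alpha_{-n} = \alpha_n$ yields $(\tau_1\Phi)(R_1) = R_2$; the Onsager relation (\ref{eq:OA2}) then gives $[G_1,R_2] = 2(\tau_1\Phi)(R_2) - 2R_1$, so $(\tau_1\Phi)(R_2) \in I$. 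The symmetric computation with $\tau_0\Phi = (\tau_1\Phi)^{-1}$ closes the verification.

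With $\taub_0, \taub_1, \Phib$ in hand, applying $\varphi_N$ to the identity $A_m = (\tau_1\Phi)^m(A_0)$ of Proposition \ref{prop:poly} gives $A_m = (\taub_1\Phib)^m(A_0)$ in $\overline{\cO}_N$. The first relation in (\ref{dav}) then reads $\sum_{n=-N}^{N}\alpha_n(\taub_1\Phib)^{-n}(A_0) = 0$, and the change of summation $n \mapsto -n$ together with $\alpha_{-n} = \alpha_n$ identifies it with $S_N(A_0) = 0$. The second relation, rewritten using $A_1 = (\tau_1\Phi)(A_0)$, becomes $\sum_{n=-N}^{N}\alpha_n (\taub_1\Phib)^n(A_1) = S_N(A_1) = 0$. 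This shows that the two sets of relations are equivalent and yields the proposition.

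The main obstacle is the first step, namely the stability of the Lie ideal generated by (\ref{dav}) under $\Phi, \tau_0, \tau_1$, which is what legitimises passing the automorphisms to the quotient; once that is in place, the identification of (\ref{dav}) with (\ref{eq:opdav}) is a routine reindexing relying only on the symmetry of the coefficients $\alpha_n$.
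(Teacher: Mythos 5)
Your proof is correct and follows essentially the same route as the paper, which states the proposition without a separate proof as an immediate consequence of Proposition \ref{prop:poly} together with the reindexing $n\mapsto -n$ and the symmetry $\alpha_{-n}=\alpha_n$. Your additional verification that the ideal generated by the relations (\ref{dav}) is stable under $\Phi,\tau_0,\tau_1$ --- so that $\taub_0,\taub_1,\Phib$ are genuinely well defined on $\overline{\cO}_N$ and the operator $S_N$ makes sense there --- addresses a point the paper leaves implicit, and the computation you give for it (using $\Phi(A_{-n})=A_{n+1}$, the shift action of $\taub_1\Phib$, and relation (\ref{eq:OA2})) is accurate.
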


Furthermore, one has $[(\taub_1\Phib)^p,S_N]=0$ for any $p\in {\mathbb Z}$. Together with the second relation in  (\ref{eq:idA}),  it follows:
\begin{rem} The relations (\ref{dav}) imply:
\beqa
\sum_{n=-N}^{N}\alpha_n A_{n+p}=0\ ,\quad
\sum_{n=-N}^{N}\alpha_n G_{n+p}=0 \quad \mbox{for any} \quad p\in {\mathbb Z}\ .\label{dav2}
\eeqa
It follows that the algebra $\overline{\cO}_N$ has only $3N$ linearly independent elements. We choose the set $\{A_n, G_m|n=-N+1,...,N; m=1,...,N\}$.
\end{rem}
Note that above relations can be derived using the commutation relations (\ref{eq:OA1})-(\ref{eq:OA3})  \cite{Davies}.\vspace{1mm}

In the algebra $\overline{\cO}_N$, all higher elements can be written in terms of the elements $\{A_n, G_m|n=-N+1,...,N; m=1,...,N\}$. Without loss of generality, choose $\alpha_N\equiv 1$. By induction using (\ref{dav2}), one finds:
\beqa
A_{-N-p} = (-1)^{p+N} \sum_{j=-N+1}^N {\mathbb U}_{p,j}^{(N)}(\alpha_0,\cdots,\alpha_{N-1}) A_j \quad \mbox{for any} \quad p\geq 0\ ,
\eeqa
where ${\mathbb U}_{p,j}^{(N)}(\alpha_0,\cdots,\alpha_{N-1})$ is a $N-$variable polynomial that is determined recursively through the relation:
\beqa
{\mathbb U}_{p+1,j}^{(N)}(\alpha_0,\cdots,\alpha_{N-1})  =
 \sum_{k=0}^{p} (-1)^{k}\alpha_{k-N+1}{\mathbb U}_{p-k,j}^{(N)}(\{\alpha_{l}\})\  + \ 
 \begin{cases}
   (-1)^{N+p}\alpha_{j+p+1}&\mbox{for}\ -N+1 \leq j \leq N-p-1 \\
   0 & \mbox{for}\ N-p \leq j \leq N
 \end{cases} \nonumber \ ,
\eeqa 
with the convention $\alpha_{-N+1+k}\equiv 0$ if $k\geq 2N$ and initial conditions:
\beqa
 {\mathbb U}_{0,j}^{(N)}(\alpha_0,\cdots,\alpha_{N-1}) = (-1)^{N+1}\alpha_j \nonumber \ . 
\eeqa
Similarly, one gets:
\beqa
A_{N+p+1} &=& (-1)^{p+N} \sum_{j=-N+1}^N {\mathbb U}_{p,j}^{(N)}(\alpha_0,\cdots,\alpha_{N-1}) A_{1-j}\ ,\nonumber\\
G_{N+p+1} &=& (-1)^{p+N+1} \sum_{j=-N+1}^N {\mathbb U}_{p,j}^{(N)}(\alpha_0,\cdots,\alpha_{N-1}) G_{j-1} \quad \mbox{for any} \quad p\geq 0\ ,\nonumber
\eeqa
where (\ref{eq:OA1}) has been used to derive the second relation. For $N=1$, one finds that ${\mathbb U}_{n-j,j}^{(1)}(\alpha_0)=U_{n}(\alpha_0)$ is the Chebyshev polynomial of second kind.
\vspace{1mm}

\subsection{FRT presentation of the quotients $\overline{\cO}_N$}
For the class of quotients  $\overline{\cO}_N$ of the Onsager algebra, the corresponding solutions of the non-standard Yang-Baxter algebra (\ref{eq:Al}) are now constructed.
\begin{prop}\label{prop1}
 The non-standard classical Yang-Baxter algebra  (\ref{eq:Al}) for the r-matrix (\ref{r12basic}) and 
\begin{equation}
 B^{(N)}(u)= \frac{1}{p^{(N)}(u)}\begin{pmatrix}
       {\cal G}^{(N)}(u) &{\cal A}^{-(N)}(u)\\
       {\cal A}^{+(N)}(u) & -{\cal G}^{(N)}(u)
      \end{pmatrix}  \qquad \mbox{with} \qquad  p^{(N)}(u)=\sum_{p=-N}^{N}\alpha_p u^{-p}\label{eq:BOfinite}
\end{equation}
where, by setting $\displaystyle f_p^{(N)}(u)=\sum_{q=p}^N\alpha_q u^{p-q}$,
\begin{eqnarray}
 {\cal A}^{+(N)}(u)&=& \sum_{p=1}^{N}\big(f_p^{(N)}(u)A_p-uf_p^{(N)}(u^{-1}) A_{-p+1}\big)\ ,\label{fn1}\\
{\cal A}^{-(N)}(u)&=& \sum_{p=1}^{N}\big(u^{-1}f_p^{(N)}(u) A_{-p+1}-f_p^{(N)}(u^{-1})A_p\big)\ ,\label{fn2}\\
{\cal G}^{(N)}(u)&=&    \sum_{p=1}^{N} \big( f_p^{(N)}(u) +f_p^{(N)}(u^{-1}) \big) G_p  - \sum_{p=1}^{N} \alpha_pG_p\   \ ,\label{fn3}
\end{eqnarray}
provides an FRT presentation of the algebra ${\overline \cO_N}$.
\end{prop}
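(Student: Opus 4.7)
The plan is to derive Proposition \ref{prop1} from the FRT presentation of the full Onsager algebra given in Theorem \ref{th1}, by showing that the rational $B^{(N)}(u)$ defined here coincides, in the quotient $\overline{\cO}_N$, with the original $B(u)$ of Theorem \ref{th1}. Concretely, I would multiply the currents ${\cal A}^\pm(u)$, ${\cal G}(u)$ by the Laurent polynomial $p^{(N)}(u)=\sum_{p=-N}^N \alpha_p u^{-p}$ and exploit the extended Davies relations (\ref{dav2}) to truncate the resulting formal series to genuine polynomials. The target identities are
\begin{equation*}
p^{(N)}(u)\,{\cal A}^{+}(u)={\cal A}^{+(N)}(u),\qquad p^{(N)}(u)\,{\cal A}^{-}(u)={\cal A}^{-(N)}(u),\qquad p^{(N)}(u)\,{\cal G}(u)={\cal G}^{(N)}(u),
\end{equation*}
all understood modulo (\ref{dav2}) in $\overline{\cO}_N$.

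For the first of these, I would expand $p^{(N)}(u){\cal A}^+(u)=\sum_{k}u^k\bigl(\sum_{p}\alpha_p A_{k+p}\bigr)$, with the inner sum ranging over those $p\in[-N,N]$ for which $k+p\ge 1$. For $k\ge N+1$ every $p\in[-N,N]$ contributes and the inner sum vanishes by (\ref{dav2}); for $1-N\le k\le N$ the unconstrained sum still vanishes, so the coefficient equals minus the missing tail $-\sum_{p<1-k}\alpha_p A_{k+p}$, which after the index change $q=N-p$, $j=k+p-1$ is exactly the coefficient of $u^k$ read off from $\sum_{p=1}^N\bigl(f_p^{(N)}(u)A_p-u\,f_p^{(N)}(u^{-1})A_{1-p}\bigr)$. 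The treatment of ${\cal A}^-(u)$ is symmetric, and for ${\cal G}^{(N)}(u)$ the same truncation applies once one uses the $G$-version of (\ref{dav2}) together with $G_{-n}=-G_n$; the constant term $-\sum_{p=1}^N \alpha_p G_p$ then arises from combining the positive- and negative-index contributions.

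Having established these three identities, the first implication is immediate: since $B(u)$ satisfies the non-standard classical Yang-Baxter algebra (\ref{eq:Al}) by Theorem \ref{th1} and since the $r$-matrix (\ref{r12basic}) is insensitive to multiplication of $B$ by a scalar function of its own spectral parameter, the equality $B^{(N)}(u)=B(u)$ in $\overline{\cO}_N$ forces $B^{(N)}(u)$ to satisfy (\ref{eq:Al}). For the converse direction, I would argue that expanding (\ref{eq:Al}) for $B^{(N)}(u)$ in powers of $u$ and $v$ and extracting the coefficients involving $A_0,A_1$ yields (via Theorem \ref{th1} applied degree-by-degree) the Dolan-Grady relations (\ref{eq:DGche}), while the finite-polynomial nature of the entries ${\cal A}^{\pm(N)}(u)$, ${\cal G}^{(N)}(u)$ forces the Davies quotient relations (\ref{eq:opdav}) to hold; combined with Proposition \ref{pro:che}, this identifies the resulting algebra with $\overline{\cO}_N$.

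The main obstacle I anticipate is the bookkeeping in the second paragraph, specifically matching the coefficients of the truncated $p^{(N)}(u){\cal A}^\pm(u)$ with the closed-form $f_p^{(N)}$-expression: the cancellations depend on the symmetry $\alpha_{-n}=\alpha_n$ and on the correct handling of the boundary terms $A_0$ and $A_1$, and writing out the resulting double sum in the two index systems (the $(k,p)$-system on one side and the $(p,q)$-system of $f_p^{(N)}$ on the other) is the delicate step. Once that identification is made, the rest of the argument proceeds by direct quotation of Theorem \ref{th1} and Proposition \ref{pro:che}.
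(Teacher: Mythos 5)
Your proposal is correct and follows essentially the same route as the paper: the paper's proof likewise imposes the extended Davies relations (\ref{dav2}) on the series ${\cal A}^{+}(u)$ of Theorem \ref{th1}, factorizes to obtain $u^{N}p^{(N)}(u){\cal A}^{+}(u)$ as a finite Laurent polynomial matching (\ref{fn1}), and then says the other two currents are handled identically. Your coefficient-by-coefficient bookkeeping (and the use of $\alpha_{-n}=\alpha_n$ to match the two index systems) checks out, and the converse direction you sketch is a point the paper's proof simply leaves implicit.
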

\begin{proof}  The goal consists in expressing all the elements $\{A_n,G_m|n,m \in\ZZ\}$ present in the FRT presentation of the Onsager algebra (see Theorem \ref{th1}) in terms 
of the $3N$ linearly independent elements of $\overline \cO_N$ $\{A_n, G_m|n=-N+1,...,N; m=1,...,N\}$.
For instance, let us consider the current ${\cal A}^{+}(u)$ in (\ref{eq:BO}). Imposing the first relation of (\ref{dav2}), it follows:
\begin{eqnarray}
{\cal A}^{+}(u)&=& \sum_{p=1}^{N}u^pA_p + \sum_{p=N+1}^{\infty}u^{p}A_p\nonumber\\
&=& \sum_{p=1}^{N}u^pA_p -\frac{1}{\alpha_N}\sum_{p=1}^{\infty}u^{p+N}\sum_{q=-N}^{N-1}\alpha_q A_{p+q}\nonumber\\
&=&  \sum_{p=1}^{N}u^pA_p  -\frac{1}{\alpha_N} \sum_{q=-N}^{-1}\alpha_q u^{N-q} \!\!\!\!\!\!  \underbrace{  \sum_{p=1}^{\infty}u^{p+q} A_{p+q}}_{={\cal A}^{+}(u) + \sum_{p=q+1}^0 u^pA_p} \!\!\!\!\!\!  - \frac{\alpha_0}{\alpha_N}u^N 
\underbrace{ \sum_{p=1}^{\infty}u^{p} A_{p}}_{={\cal A}^{+}(u)} 
 -\frac{1}{\alpha_N} \sum_{q=1}^{N-1}\alpha_q u^{N-q}
 \!\!\!\!\!\! \underbrace{ \sum_{p=1}^{\infty}u^{p+q} A_{p+q}}_{={\cal A}^{+}(u) - \sum_{p=1}^q u^pA_p}\ .\nonumber
\end{eqnarray}
By factorizing ${\cal A}^{+}(u)$ in the last equation and after simplifications, one gets:
\beqa
{\cal A}^{+}(u) \underbrace{ \sum_{q=-N}^N \alpha_q u^{N-q}}_{\equiv u^N p^{(N)}(u)}= \sum_{q=1}^{N}\alpha_q u^{N-q} \sum_{p=1}^qu^pA_p  - \sum_{q=-N}^{-1}\alpha_q u^{N-q}\sum_{p=q+1}^0 u^pA_p\ .
\eeqa
It follows:
\beqa
{\cal A}^{+}(u)  = \frac{1}{p^{(N)}(u)} \sum_{q=1}^{N}\sum_{p=1}^q \big( \alpha_q u^{p-q} A_p - \alpha_{-q} u^{q-p+1} A_{-p+1}\big) \ ,\nonumber
\eeqa
which leads to the formula (\ref{fn1}). 
Applying the same procedure to  ${\cal A}^{-}(u)$ and ${\cal G}(u)$, we obtain
the other formulae. 
\end{proof}

Using the FRT presentation, a commutative subalgebra of $\overline{\cO}_N$ can be easily identified. Note that the result below is a straightforward restriction of \cite[Proposition 2.5]{BBC} to the quotients of the Onsager algebra.
\begin{prop}\label{prop220} Let $\kappa,\kappa^*,\mu$ be generic scalars. A generating function of mutually commuting elements in  $\overline{\cO}_N$ is given by:
\beqa
b^{(N)}(u) &=& \frac{1}{p^{(N)}(u)}\ \sum_{p=0}^{N-1} \big(f_p^{(N)}(u)-f_p^{(N)}(u^{-1})\big)I_{p}\ ,\label{gen}
\eeqa
where 
\beqa
I_{p}&=& \kappa (A_p+A_{-p}) + \kappa^* (A_{p+1}+A_{-p+1}) + \mu(G_{p+1}-G_{p-1})\ ,\label{charge} \\ \nonumber
I_0&=&  \kappa A_0+ \kappa^* A_1 + \mu G_{1}\ .
\eeqa
\end{prop}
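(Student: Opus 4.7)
The plan is to lift the argument from [BBC, Proposition 2.5], which established the analogous commuting family in the (unquotiented) Onsager algebra, via the FRT presentation of $\overline{\cO}_N$ just obtained in Proposition \ref{prop1}. The key observation is that $B^{(N)}(u)$ in (\ref{eq:BOfinite}) satisfies the \emph{same} non-standard classical Yang-Baxter algebra (\ref{eq:Al}) with the \emph{same} r-matrix (\ref{r12basic}) as the $B(u)$ of Theorem \ref{th1}. Consequently any commutativity statement that is proved in [BBC] using only the reflection-type relation (\ref{eq:Al}) transports verbatim after the substitution $B\mapsto B^{(N)}$.

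First, I would recall the general Sklyanin-type mechanism: given any solution of (\ref{eq:Al}), one constructs a one-parameter family of scalar elements via a trace $t(u)=\mathrm{tr}_0\bigl(K(u)\,B(u)\bigr)$ for a suitable scalar-valued boundary matrix $K(u)$, and checks that $[t(u),t(v)]=0$ follows formally from (\ref{eq:Al}) and the symmetry $r_{12}(u,v)=r_{21}(v,u)$ of the r-matrix. In [BBC, Proposition 2.5] this mechanism applied to $B(u)$ of (\ref{eq:BO}) delivers a generating function whose expansion coefficients on the basis $\{A_n,G_m\}$ are precisely the quantities $I_p$ of (\ref{charge}).

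Next, I would substitute $B^{(N)}(u)$ for $B(u)$ in that construction. By Proposition \ref{prop1} the same relation (\ref{eq:Al}) holds, so the same argument yields a one-parameter family $b^{(N)}(u)=\mathrm{tr}_0\bigl(K(u)\,B^{(N)}(u)\bigr)$ of mutually commuting elements of $\overline{\cO}_N$. The explicit form (\ref{gen}) is then obtained by plugging (\ref{fn1})--(\ref{fn3}) into $\mathrm{tr}_0(K(u)\,B^{(N)}(u))/p^{(N)}(u)$ and rearranging the resulting double sum in the summation indices: the pairing of ${\cal A}^{+(N)}(u)$ with ${\cal A}^{-(N)}(u)$ under the $u\leftrightarrow u^{-1}$, $A_p\leftrightarrow A_{-p+1}$ symmetry manifest in (\ref{fn1})--(\ref{fn2}) produces the antisymmetric factor $f_p^{(N)}(u)-f_p^{(N)}(u^{-1})$, while the $G$-contribution from (\ref{fn3}) telescopes after reindexing to give the combination $G_{p+1}-G_{p-1}$ that appears in $I_p$.

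The main obstacle is this final identification. Steps one and two are essentially a formal transport of the [BBC] result through the quotient map; the substantive work lies in the combinatorial bookkeeping of step three, especially in verifying that the three separate contributions (from ${\cal A}^{+(N)}$, ${\cal A}^{-(N)}$, and ${\cal G}^{(N)}$) regroup precisely into the linear combination $\kappa(A_p+A_{-p})+\kappa^*(A_{p+1}+A_{-p+1})+\mu(G_{p+1}-G_{p-1})$ multiplied by $f_p^{(N)}(u)-f_p^{(N)}(u^{-1})$, and that the summation range truncates to $0\le p\le N-1$. This is purely algebraic verification using the definitions of $f_p^{(N)}$ and $p^{(N)}$, with no new conceptual input beyond what is already contained in Proposition \ref{prop1} and [BBC].
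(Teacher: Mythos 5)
Your proposal is correct and follows essentially the same route as the paper: the paper likewise invokes [BBC, Proposition~2.5] to get $[b^{(N)}(u),b^{(N)}(v)]=0$ from the trace $b^{(N)}(u)=\mathrm{tr}\,\big(M(u)B^{(N)}(u)\big)$ with a scalar boundary matrix $M(u)$ (with entries built from $\kappa,\kappa^*,\mu$ and satisfying the compatibility condition $[\mathrm{tr}_1(\overline{r}_{12}(u,v)M_1(u)),M_2(v)]=0$, which is the precise form of your ``suitable $K(u)$''), and then obtains \eqref{gen} by inserting \eqref{fn1}--\eqref{fn3} and regrouping. The only detail you leave implicit is the explicit form of that boundary matrix, but the mechanism and the final bookkeeping are exactly as in the paper's proof.
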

\begin{proof} Introduce the $2 \times 2$ matrix:
\begin{equation}
  M(x)=\begin{pmatrix}
        \mu/x&\kappa+\kappa^*/x\\
        \kappa + \kappa^* x &\mu x
       \end{pmatrix}\;
 \end{equation}
which is a solution of
\begin{equation}\label{eq:reD}
  [tr_1 ( \overline{r}_{12}(u,v) M_1(u) ) \ ,\ M_2(v) ]=0\; .
 \end{equation}
Then, by using the result \cite[Proposition 2.5 ]{BBC}, one shows that $b^{(N)}(u)=tr M(u) B^{(N)}(u)$
satisfies\ $[ b^{(N)}(u)\ ,\ b^{(N)}(v) ]=0$. 
Inserting  (\ref{fn1})-(\ref{fn3}) in $b^{(N)}(u)=tr \big( M(u) B^{(N)}(u)\big)$, one derives (\ref{gen}).
\end{proof}

\section{$\overline{\cO}_1$ and $\overline{\cO}_2$ and generalized classical Askey-Wilson algebras}

The defining relations of the algebra $\overline{\cO}_N$ are easily extracted from the defining relations of the non-standard classical Yang-Baxter algebra (\ref{eq:Al}).  
For instance, we consider the cases $N=1,2$ below. For $N=1$, we prove that $\overline{\cO}_1$ is isomorphic to the Askey-Wilson algebra introduced by \cite{Z91} specialized at $q=1$.
\subsection{The classical Askey-Wilson algebra $aw(3)$}

We treat here in detail the case of the quotient $\overline{\cO}_1$. To simplify the notations, we choose $\alpha_0=\alpha$ and $\alpha_{\pm 1}=1$. 
Equation \eqref{eq:BOfinite} becomes
\begin{equation}
  B^{(1)}(u)=\frac{1}{p^{(1)}(u)}\begin{pmatrix}
         G_1 & u^{-1}A_0-A_1\\
        -uA_0+A_1 &-G_1
       \end{pmatrix} \  \label{B1}
\end{equation} 
where $p^{(1)}(u)=u+\alpha+u^{-1}$. Then, the non-standard Yang-Baxter algebra (\ref{eq:Al}) provides the following defining relations of $\overline \cO_1$
\begin{eqnarray}
 [G_1\ , \ A_0 ]= 2\alpha A_0 + 4 A_1\ ,\quad \,\quad [A_1\ ,\ G_1]=2\alpha A_1 +4A_0 \ ,\quad \,\quad [A_1\ ,\ A_0]=4G_1\; .\label{eq:O1}
\end{eqnarray}
\begin{rem}  The r-matrix \eqref{r12basic} allows us to construct a representation of $\overline \cO_1$. Indeed, the map $\pi(B^{(1)}_1(u))=r_{13}(u,w)$ satisfies the non-standard Yang-Baxter algebra (\ref{eq:Al}) 
and the expansion w.r.t. $u$ are the same. By comparing the expansions, one gets the following representation, for $\alpha=-w-w^{-1}$,
\begin{equation}\label{eq:rep1}
 \pi(G_1)=(w^{-1}-w)\begin{pmatrix}
           1&0\\0&-1
          \end{pmatrix}
          \quad,\qquad
           \pi(A_0)=2\begin{pmatrix}
           0&1\\1&0
          \end{pmatrix}
          \quad\text{and}\qquad
           \pi(A_1)=2\begin{pmatrix}
           0&w^{-1}\\w&0
          \end{pmatrix}\ .
\end{equation}
\end{rem}
By Proposition \ref{pro:che}, there is another presentation of the algebra $\overline{\cO}_1$. Indeed, $\overline{\cO}_1$ is generated by $A_0$ and $A_1$ subject to
\begin{equation}
 [A_0,[A_0,A_1 ]]- 8\alpha A_0 - 16 A_1=0\ ,\quad \,\quad [A_1,[A_1,A_0]]- 8\alpha A_1 -16A_0=0. \ \label{eq:relaw}
\end{equation}
Let us remark that the Dolan-Grady relations \eqref{eq:DGche} are not necessary in this case since they are implied by \eqref{eq:relaw}.
\vspace{1mm}
 In \cite{Z91}, Zhedanov introduced the Askey-Wilson algebra  with three generators $K_0,K_1,K_2$ and deformation parameter $q$. More recently, a central extension of the original Askey-Wilson algebra \cite{Z91} 
 called the universal Askey-Wilson algebra has been introduced \cite{T11}. In that paper, besides the original presentation of \cite{Z91}, a second presentation of the  universal Askey-Wilson algebra is given. 
 Below, we show that the quotient of the Onsager algebra $\overline{\cO}_1$ is isomorphic to the classical ($q=1$) analog of the Askey-Wilson algebra, denoted $aw(3)$. The first presentation of the original Askey-Wilson algebra is now recalled.
\begin{defn}\label{def41} \cite{Z91}
The Askey-Wilson algebra  has three generators $K_0,K_1,K_2$ that satisfy the commutation relations\footnote{We denote the $q-$commutator $[X,Y]_q=qXY-q^{-1}YX$.}:
\beqa
\big[K_0,K_1\big]_q=K_2\ ,\quad
\big[K_2,K_0\big]_q=BK_0+C_1K_1+D_1\ ,\quad
\big[K_1,K_2\big]_q=BK_1+C_0K_0+D_0\ , \label{eq:aw3K}
\eeqa
where $B,C_0,C_1,D_0,D_1$ are the structure constants of the algebra.
\end{defn}
\begin{rem} In terms of the generators $K_0,K_1$, the defining relations of the Askey-Wilson algebra read:
\beqa
\big[K_0,\big[K_0,K_1\big]_q\big]_{q^{-1}}+BK_0+C_1K_1+D_1=0\ ,\quad
\big[K_1,\big[K_1,K_0\big]_q\big]_{q^{-1}}+BK_1+C_0K_0+D_0=0\ \nonumber.
\eeqa
\end{rem}

\begin{defn}\label{def42}
The classical Askey-Wilson algebra, denoted $aw(3)$, is the Askey-Wilson algebra specialized to $q=1$.
We keep the same notations for the classical Askey-Wilson algebra than for the usual Askey-Wilson algebra.
\end{defn}

\begin{prop}\label{prop41} The algebra $\overline{\cO}_1$ and the algebra $aw(3)$ are isomorphic.
\end{prop}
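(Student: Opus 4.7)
The plan is to construct mutually inverse homomorphisms using the compact two-generator presentations of both algebras, which turn out to have essentially the same shape.

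First, I would reduce both algebras to presentations by $A_0, A_1$ (resp. $K_0, K_1$) alone. On the Onsager side, Proposition~\ref{pro:che} together with \eqref{eq:relaw} says that $\overline{\cO}_1$ is presented by $A_0, A_1$ subject to
\begin{eqnarray*}
[A_0,[A_0,A_1]] = 8\alpha A_0 + 16 A_1, \qquad [A_1,[A_1,A_0]] = 8\alpha A_1 + 16 A_0,
\end{eqnarray*}
the Dolan-Grady relations being redundant in the $N=1$ case as noted just after \eqref{eq:relaw}. On the Askey-Wilson side, at $q=1$ the relation $[K_0,K_1]_q=K_2$ of Definition~\ref{def41} becomes $K_2=[K_0,K_1]$, allowing elimination of $K_2$; the remark following Definition~\ref{def41} then yields the two relations
\begin{eqnarray*}
[K_0,[K_0,K_1]] = -B K_0 - C_1 K_1 - D_1, \qquad [K_1,[K_1,K_0]] = -B K_1 - C_0 K_0 - D_0.
\end{eqnarray*}

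Second, I would identify structure constants by setting $B=-8\alpha$, $C_0=C_1=-16$ and $D_0=D_1=0$ in $aw(3)$. With this specialization, the above two presentations become identical under $K_0\leftrightarrow A_0$, $K_1\leftrightarrow A_1$. Define $\phi: aw(3)\to \overline{\cO}_1$ by $K_0\mapsto A_0$, $K_1\mapsto A_1$, $K_2\mapsto [A_0,A_1]=-4G_1$: the three defining relations of Definition~\ref{def41} at $q=1$ are satisfied in $\overline{\cO}_1$, so $\phi$ is a well-defined homomorphism, and it is surjective since $A_0, A_1, G_1$ generate $\overline{\cO}_1$ by the remark following Proposition~\ref{pro:che}. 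Conversely, define $\psi:\overline{\cO}_1\to aw(3)$ by $A_0\mapsto K_0$, $A_1\mapsto K_1$: the two defining relations of $\overline{\cO}_1$ are precisely the two-generator relations of $aw(3)$ with the above structure constants, so $\psi$ is well defined.

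Third, since $\phi\circ\psi$ fixes the generators $A_0, A_1$ and $\psi\circ\phi$ fixes $K_0, K_1$ (and $K_2=[K_0,K_1]$ is forced), both compositions are identities and the two algebras are isomorphic.

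The main obstacle is essentially bookkeeping: carefully matching signs and coefficients between \eqref{eq:relaw} and the two-generator form of the $q=1$ Askey-Wilson relations, and confirming that the auxiliary generator $K_2$ is redundant at $q=1$ so that the two-generator presentations genuinely line up. No non-trivial identity in the Onsager algebra is needed beyond what Proposition~\ref{pro:che} has already supplied.
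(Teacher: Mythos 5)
Your overall strategy --- match the two-generator presentation of $\overline{\cO}_1$ from Proposition~\ref{pro:che}/\eqref{eq:relaw} against the $q=1$ Askey--Wilson relations --- is the same as the paper's, and your bookkeeping for the case you treat is correct (one can check directly that $K_0\mapsto A_0$, $K_1\mapsto A_1$, $K_2\mapsto[A_0,A_1]=-4G_1$ satisfies all three relations of Definition~\ref{def41} with your choice of constants, using \eqref{eq:O1}). The gap is in the second step, where you \emph{fix} the structure constants to $B=-8\alpha$, $C_0=C_1=-16$, $D_0=D_1=0$. Definition~\ref{def41} leaves $B,C_0,C_1,D_0,D_1$ as arbitrary structure constants, so $aw(3)$ is a five-parameter family of algebras, and the proposition is asserting that a (generic) member of that family is isomorphic to $\overline{\cO}_1$ for a suitable $\alpha$. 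As written, your argument only identifies $\overline{\cO}_1$ with the single member of the family sitting on a codimension-four slice; it does not show, for instance, that $aw(3)$ with $D_0\neq 0$ is isomorphic to any $\overline{\cO}_N$ quotient.

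The missing ingredient is the affine change of generators: the paper takes $K_0\mapsto a_0A_0+b_0$, $K_1\mapsto a_1A_1+b_1$ with generic scalars $a_0,a_1,b_0,b_1$, and these four parameters together with $\alpha$ realize generic $(B,C_0,C_1,D_0,D_1)$ (the scalings $a_0,a_1$ produce arbitrary nonzero $C_0,C_1$ and, with $\alpha$, arbitrary $B$; the shifts $b_0,b_1$ then solve for $D_0,D_1$ via an invertible linear system when $\alpha^2\neq 4$). The repair to your proof is routine --- precompose your map with this affine substitution and redo the matching of constants --- but without it the statement proved is strictly weaker than the proposition. A minor further remark: your detour through the two-generator presentation of $aw(3)$ (eliminating $K_2$ at $q=1$) is harmless but unnecessary; checking the three relations \eqref{eq:aw3K} directly against \eqref{eq:O1} is shorter and avoids having to justify that the two presentations of $aw(3)$ agree.
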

\begin{proof} The defining relations of $\overline{\cO}_1$ are given in (\ref{eq:O1}). Let $a_0,a_1,b_0,b_1$ be generic scalars. The isomorphism is given by:
\beqa
K_0 \mapsto a_0A_0+b_0 \ ,\quad K_1 \mapsto a_1A_1+b_1 \ ,\quad K_2 \mapsto -\frac{a_0a_1}{4}G_1\   ,\quad q \mapsto 1\nonumber
\eeqa
with the identification of the structure constants:
\beqa
B \mapsto -8\alpha/a_0a_1\ , \quad C_0= -16/a_0^2\ ,\quad C_1=-16/a_1^2\ ,\quad D_0=  -\frac{8\alpha b_0 +16b_1}{a_0^2a_1}\ ,\quad    D_1=-\frac{8\alpha b_1 +16b_0}{a_1^2a_0}\ .\nonumber
\eeqa
\end{proof}
A corollary of this proposition is that Proposition \ref{prop1} provides an FRT presentation of $aw(3)$. Note that for a specialization of the structure constants $B=D_0=D_1$ in (\ref{eq:aw3K}), one recovers the q-deformation of the Cartesian presentation of the $sl_2$ Lie algebra \cite{Z92}. From that point of view, the representation (\ref{eq:rep1}) is natural.\vspace{1mm}

The universal Askey-Wilson algebra has been introduced in \cite{T11}. For this algebra, a second presentation is known \cite[Theorem 2.2]{T11}. It is given in terms of the quotient of 
the $q$-deformed analog of the Dolan-Grady relations (\ref{eq:DG}) by a relation of quartic order in the two fundamental generators.  These relations correspond to the presentation 
given by relations \eqref{eq:relaw}.
Let us mention also that, from the second relation of (\ref{dav2}) with $N=p=1$, one gets \ $\alpha G_1+G_2=0$. In terms of $A_0,A_1$ this relation reads: 
\beqa
8\alpha[A_1,A_0]  + 2(A_1A_0A_1A_0 -A_0A_1A_0A_1)-A_1^2A_0^2 + A_0^2A_1^2 = 0. \label{eq:extra}
\eeqa
Note that (\ref{eq:extra}) is not necessary: 
it follows from the commutator of the first (resp. second) relation in (\ref{eq:relaw}) with $A_0$ (resp. $A_1$).
We would like to point out that the relations  (\ref{eq:relaw}) coincide with (2.2), (2.3) of \cite{T11} for the specialization $q=1$ and central elements evaluated to scalar values. 
Also, the Dolan-Grady relations (\ref{eq:DG}) together with (\ref{eq:extra}) coincide with the specialization  $q=1$  (and a suitable identification of the central element $\gamma$ in terms of $\alpha$)
of the relation given in \cite[Theorem 2.2]{T11}.
\vspace{1mm}

\subsection{The generalized classical Askey-Wilson algebra $aw(6)$}

 For $N=2$, choose $\alpha_0=\alpha'$, $\alpha_{\pm 1}=\alpha$ and $\alpha_{\pm 2}=1$, equation \eqref{eq:BOfinite} reads
\begin{equation}
  B^{(2)}(u)=\frac{1}{p^{(2)}(u)}\begin{pmatrix}
         G_2 +(u+\alpha  +u^{-1})G_1& u^{-1} A_{-1}+u^{-1}(\alpha+u^{-1})A_0-(u+\alpha) A_1- A _2\\
       -u A_{-1}-u(u+\alpha) A_0+u(\alpha + u^{-1})A_1+ A_2 & - G_2 -(u+\alpha +u^{-1})G_1
       \end{pmatrix}
\end{equation}
where $p^{(2)}(u)=u^2+\alpha u +\alpha' +\alpha u^{-1}+u^{-2}$. One gets the following defining relations for $\overline{\cO}_2$ from (\ref{eq:Al})
\begin{eqnarray}
&&[ A_0\ , \ A_{-1}] = [A_2 \ , \ A_1] =[A_1  \ , \ A_0 ] = 4G_1  \ ,\quad [A_1 \ , \ A_{-1}] =[A_2 \ , \ A_0] = 4 G_2  \ ,\quad  \label{aw61}\\
&&  [ A_2 \ , \ A_{-1}] =4(1-\alpha')G_1-4\alpha G_2\ , \\
&&[G_1 \ , \ A_0] = 2A_1 -2 A_{-1} \ ,\quad [G_1 \ , \ A_1] = 2A_2 -2 A_{0}\ ,\\
&& [G_1 \ , \ A_{-1}] =2\alpha A_{-1}+2(1+\alpha') A_0+2\alpha A_1+2 A_2 \ ,\\
&& [G_1 \ , \ A_2] = -2A_{-1} -2\alpha A_0 -2(1+\alpha')A_1-2\alpha A_2 \ ,\\
&& [G_2 \ , \ A_0] =2\alpha A_{-1} +2\alpha' A_0 +2\alpha A_1 +4 A_2\ ,\\
&& [G_2 \ , \ A_1] =-4A_{-1} -2\alpha A_0 -2\alpha' A_1 -2\alpha A_2\ ,\\
&& [ G_2 \ , \ A_{-1} ] =2(\alpha'-\alpha^2) A_{-1}+2\alpha(1-\alpha')A_0+2(2-\alpha^2)A_1-2\alpha A_2\ ,\\
&& [ G_2 \ , \ A_2 ] =2\alpha A_{-1}+2(\alpha^2-2)A_0+2\alpha(\alpha'-1)A_1+2(\alpha^2-\alpha')A_2 \ ,\\
&&[G_1\ , \ G_2 ]=0 \ .\label{aw610}
\end{eqnarray}
\begin{rem}
As previously, a representation of  $\overline{\cO}_2$ is obtained from the r-matrix as follows
\begin{equation}
 \pi(B^{(2)}_1(u))=r_{13}(u,w_1)+r_{14}(u,w_2)
\end{equation}
with $\alpha=-w_1-w_1^{-1}-w_2-w_2^{-1}$ and $\alpha'=w_1w_2+w_1w_2^{-1}+2+w_1^{-1}w_2+w_1^{-1}w_2^{-1}$.
By expanding w.r.t. the formal variable $u$, one gets a $4\times4$ representation for $A_{-1},A_0,A_1,A_2,G_1$ and $G_2$.
\end{rem}

By analogy with the classical Askey-Wilson algebra $aw(3)$ with defining relations \eqref{eq:O1}, we call the algebra generated by the $6$ elements $A_{-1},A_0,A_1,A_2,G_1,G_2$ 
subject to the relations (\ref{aw61})-(\ref{aw610}) the generalized classical Askey-Wilson $aw(6)$. By construction, this algebra is isomorphic to $\overline \cO_2$.

By using Proposition \ref{pro:che} for $N=2$, we get another presentation of the algebra $\overline{\cO}_2\cong aw(6)$ : it is generated by $A_0$ and $A_1$ subject to the Dolan-Grady relation \eqref{eq:DGche}
with the additional following relations
\begin{eqnarray}\label{eq:aw6}
&&[A_0,[A_1, [A_0, [A_1,A_0] ]]] -16 [A_1,[A_1,A_0] ] -8\alpha [A_0,[A_0,A_1] ] +64(\alpha'+2)A_0+128\alpha A_1=0\ ,\label{eq:relaw61}\\
&&[A_1 ,[A_0 , [A_1 , [A_0, A_1] ]]] -16 [A_0, [A_0,A_1] ] -8\alpha [A_1, [A_1,A_0] ] +64(\alpha'+2)A_1+128\alpha A_0=0.\label{eq:relaw62}
\end{eqnarray}
%

By analogy with both previous examples, we define the generalization of the classical Askey-Wilson algebra, denoted $aw(3N)$, as the algebra $\overline{\cO}_N$ generated by $3N$ generators 
$\{A_{-N+1},...,A_{N}\}$ and $\{G_1,...,G_{N}\}$ and subject to the relations projecting the FRT relation \eqref{eq:Al}. The number of defining relations $3N(3N-1)/2$ and 
we do not write them explicitly. Using the the FRT presentation, these relations can be easily extracted.
We can alternatively define $aw(3N)$ with the help of Proposition \ref{pro:che}, as the algebra generated by $A_0,A_1$ and subject to the Dolan-Grady relations (\ref{eq:DGche}) 
and relations \eqref{eq:opdav}. Finally, let us recall that 
a generating function of elements of its commutative subalgebra is given in Proposition \ref{prop220}.

\section{Another presentation of the Onsager algebra and its quotients} 

In this section, a Lie algebra denoted ${\cal A}$ is introduced. 
It is shown to be isomorphic with the Onsager algebra. The corresponding FRT presentation is given, and polynomial expressions for the elements in $\cal A$ are obtained in terms of the two fundamental generators using the automorphisms introduced in Section 2.
Then, we introduce the algebra $\overline{\cal A}_N$ as a quotient of  ${\cal A}$ by the classical analog of the relations derived in \cite{BK2}. The FRT presentation of $\overline{\cal A}_N$ is given.  
\subsection{Another presentation of the Onsager algebra}
In \cite{BK2} (see also \cite{BSh1}), an infinite dimensional quantum algebra denoted ${\cal A}_q$ has been introduced. Recently, it has been conjectured to be isomorphic to the $q-$Onsager algebra\footnote{The q-Onsager algebra is defined in terms of generators and q-analogs of the Dolan-Grady relations (\ref{eq:DG}), see \cite{Ter99}, \cite{Bas05}. Note that the same
relations showed up earlier in the context of polynomial association schemes \cite{Ter93}.} \cite{BB5}. We now introduce the classical analog of  ${\cal A}_q$ ($q=1$).
\begin{defn}\label{def:W}
${\cal A}$ is a Lie algebra with generators $\{{\cW}_{-k},{\cW}_{k+1},\tilde{\cG}_{k+1}|k\in {\mathbb Z}_{\geq 0}\}$ satisfying the following relations, for $k,l\geq 0$:
\begin{align}
&\big[{\cW}_{-l},{\cW}_{k+1}\big]=\tilde{\cG}_{k+l+1}\ ,\label{qo1}\\
&\big[{\tilde{\cG}}_{k+1},{\cW}_{-l}\big]=16{\cW}_{-k-l-1}-16{\cW}_{k+l+1}\ ,\label{qo2}\\
&\big[{\cW}_{l+1},{\tilde{\cG}}_{k+1}\big]=16{\cW}_{l+k+2}-16{\cW}_{-k-l}\ ,\label{qo3}\\
&\big[{\cW}_{-k},{\cW}_{-l}\big]=0\ ,\quad 
\big[{\cW}_{k+1},{\cW}_{l+1}\big]=0\ ,\quad  \big[{\tilde{\cG}}_{k+1},\tilde{{\cG}}_{l+1}\big]=0\ .\label{qo4}\quad 
\end{align}
\end{defn}

\begin{rem} The generators $\cW_0,\cW_1$ satisfy the Dolan-Grady relations \eqref{eq:DG}.
\end{rem}
 Indeed, inserting the relations (\ref{qo1}) into (\ref{qo2}), (\ref{qo3})  for $k=l=0$, from the first two equalities in (\ref{qo4}) for $k=1,l=0$ one gets:
\beqa
[\cW_0,[\cW_0,[\cW_0,\cW_1]]]=16[\cW_0,\cW_1], \qquad [\cW_1,[\cW_1,[\cW_1,\cW_0]]]=16[\cW_1,\cW_0].\label{eq:DGW}
\eeqa

\begin{prop}
\label{prop20}
 The non-standard classical Yang-Baxter algebra  (\ref{eq:Al}) for the r-matrix (\ref{r12basic}) and 
\begin{equation}
 B(u)= \frac{1}{2}\begin{pmatrix}
   -\frac{1}{4}\ \tilde{\cG}(u)    & u^{-1} \cW_+(u) - \cW_-(u)  \\
 -u \cW_+(u) + \cW_-(u)  &      \frac{1}{4}\ \tilde{\cG}(u)    
      \end{pmatrix}\label{eq:BW}
\end{equation}
with, by setting  $U=(u+u^{-1})/2$, 
\beqa
{\cW}_+(u)=\sum_{k=0}^\infty{\cW}_{-k}U^{-k-1} \ , \quad {\cW}_-(u)=\sum_{k=0}^\infty{\cW}_{k+1}U^{-k-1} \ ,\quad  \tilde{\cG}(u)=\sum_{k=0}^\infty\tilde{{\cG}}_{k+1}U^{-k-1}\  , \label{eq:cuW}
\eeqa
provides an FRT presentation of the algebra ${\cal A}$.
\end{prop}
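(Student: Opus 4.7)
The plan is to verify the FRT relation (\ref{eq:Al}) directly, by substituting the ansatz (\ref{eq:BW}) and the $r$-matrix (\ref{r12basic}) and matching coefficients in the spectral-parameter expansions; this mirrors the proof template of Theorem \ref{th1}.

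The first step is to translate the $u,v$-dependent entries of the $r$-matrix into the natural spectral variables $U=(u+u^{-1})/2$ and $V=(v+v^{-1})/2$ of the currents (\ref{eq:cuW}). The key identity $(u-v)(uv-1)=-2uv(V-U)$ rewrites the common denominator of $r_{12}(u,v)$ in terms of $V-U$ and allows one to re-express each entry of $r_{12}$ as a rational function of $V-U$ with $u,v$-polynomial numerators that expand cleanly in powers of $U^{-1}, V^{-1}$.

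Next, I would compute each of the three terms of (\ref{eq:Al}) entry by entry, substituting the Laurent expansions $\cW_\pm(u)$ and $\tilde{\cG}(u)=\sum_{k\geq 0}\tilde{\cG}_{k+1}U^{-k-1}$ from (\ref{eq:cuW}). Collecting the coefficient of $U^{-k-1}V^{-l-1}$ in each of the four matrix entries yields one Lie-bracket relation among the generators of $\cal A$. A direct comparison identifies the resulting identities as exactly those in Definition \ref{def:W}: the off-diagonal entries $(1,2)$ and $(2,1)$ reproduce (\ref{qo1}) together with the two $[\cW,\cW]=0$ relations in (\ref{qo4}), while the diagonal entries $(1,1)$ and $(2,2)$ reproduce (\ref{qo2}), (\ref{qo3}), and the commutativity of the $\tilde{\cG}$'s. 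The overall prefactors $1/2$ and $-1/8$ appearing in (\ref{eq:BW}) are precisely what is needed to produce the coefficient $16$ and the correct signs on the right-hand sides of (\ref{qo2})-(\ref{qo3}); in particular, the bilinearity of the FRT commutator makes the $1/2$ appear squared in the $[\cW_\pm,\cW_\pm]$ contributions, and the product $(1/2)(-1/8)^{-1}\cdot\text{(numerics)}$ pins down the normalization uniquely.

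The main obstacle is the bookkeeping forced by the spectral-parameter mismatch: the $r$-matrix is naturally a rational function of $u,v$, whereas $B(u)$ expands in powers of $U^{-1}$, so one must first rewrite the $r$-matrix entries via the $V-U$ identity and split their $u,v$-polynomial numerators into pieces adapted to the $U^{-k-1}V^{-l-1}$ grading. Once this is set up, the verification reduces to a sequence of finite scalar identities at each order, each of which matches one of the defining relations of $\cal A$. An alternative reduction to Theorem \ref{th1} via the isomorphism $\cal A \cong \cO$ (sending $\cW_0\mapsto A_0$, $\cW_1\mapsto A_1$, which is well-defined by the Dolan-Grady relations \eqref{eq:DGW}) is possible but requires the explicit binomial expansion of $\cW_{-k},\cW_{k+1},\tilde{\cG}_{k+1}$ as polynomials in $A_0,A_1$ together with a nontrivial generating-function identification of $B(u)$ with a reflection-symmetric combination of the Onsager currents, so the direct verification above is more economical.
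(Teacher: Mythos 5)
Your proposal follows essentially the same route as the paper's proof: substitute (\ref{eq:BW}) and (\ref{r12basic}) into (\ref{eq:Al}), use $(u-v)(uv-1)=2uv(U-V)$ to recast everything in the variables $U,V$, and match the coefficients of $U^{-k-1}V^{-l-1}$ against the defining relations (\ref{qo1})--(\ref{qo4}); the paper simply records the intermediate current-level identities explicitly before expanding. (One cosmetic caveat: the FRT relation is a $4\times 4$ matrix identity, so there are sixteen entries to inspect rather than four, though most are redundant, and your assignment of which relation arises from which entry is only approximate.)
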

\begin{proof} 
Insert (\ref{eq:BW}) into (\ref{eq:Al}) with (\ref{r12basic}). Define the formal variables $U=(u+u^{-1})/2$ and $V=(v+v^{-1})/2$. One obtains equivalently:
\begin{eqnarray}
&& (U-V)\big[{\cW}_+(u),{\cW}_-(v)\big]=  \tilde{\cG}(v)-\tilde{\cG}(u)\ ,\nonumber\\
&&(U-V)\big[\tilde{\cG}(u),{\cW}_\pm(v)\big]\pm 16\big(U{\cW}_\pm(u)-V{\cW}_\pm(v)-{\cW}_\mp(u)+{\cW}_\mp(v)\big)=0\ ,\nonumber\\
&&\big[{\cW}_\pm(u),{\cW}_\pm(v)\big]=0\ ,\quad \big[\tilde{\cG}(u),\tilde{\cG}(v)\big]=0\ .\nonumber
\end{eqnarray}
Expanding the currents as (\ref{eq:cuW}), the above equations are equivalent to (\ref{qo1})-(\ref{qo4}).
\end{proof}

\begin{thm}\label{th3} The Onsager algebra  $\cO$ (see Definition \ref{def:OA}) and the algebra ${\cal A}$ (see Definition \ref{def:W}) are isomorphic.
\end{thm}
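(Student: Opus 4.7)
The plan is to construct mutually inverse Lie algebra homomorphisms $\varphi:\cO\to{\cal A}$ and $\psi:{\cal A}\to\cO$. The forward direction uses the Dolan--Grady characterization of $\cO$, while the reverse direction matches the two FRT presentations from Theorem~\ref{th1} and Proposition~\ref{prop20}.

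First I would invoke the remark following Definition~\ref{def:W}, which records that $\cW_0$ and $\cW_1$ satisfy the Dolan--Grady relations~\eqref{eq:DGW} (obtained by feeding~\eqref{qo1} into~\eqref{qo2}--\eqref{qo3} at $k=l=0$). By Davies' theorem, the DG presentation is equivalent to the original Onsager presentation, so there is a unique Lie algebra homomorphism $\varphi:\cO\to{\cal A}$ with $\varphi(A_0)=\cW_0$ and $\varphi(A_1)=\cW_1$. To establish surjectivity I would exhibit each generator of ${\cal A}$ as an iterated commutator of $\cW_0,\cW_1$: \eqref{qo1} gives $\tilde{\cG}_1=[\cW_0,\cW_1]$, \eqref{qo2}--\eqref{qo3} at $k=l=0$ give $\cW_{-1}=\cW_1+\tfrac{1}{16}[\tilde{\cG}_1,\cW_0]$ and $\cW_2=\cW_0+\tfrac{1}{16}[\cW_1,\tilde{\cG}_1]$, and an induction using \eqref{qo2}--\eqref{qo3} at higher indices handles all remaining $\cW_{-k}$, $\cW_{k+1}$, $\tilde{\cG}_{k+1}$.

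To build $\psi$ in the other direction I would match the two FRT matrices $B^{\cO}(u)$ from~\eqref{eq:BO} and $B(u)$ from~\eqref{eq:BW}. Both solve the same non-standard classical Yang--Baxter algebra~\eqref{eq:Al} with the same $r$-matrix~\eqref{r12basic}, so equating entries imposes
\[
{\cal G}(u)=-\tfrac{1}{8}\tilde{\cG}(u),\qquad {\cal A}^{\mp}(u)=\tfrac{1}{2}\bigl(u^{\mp 1}\cW_+(u)-\cW_\mp(u)\bigr).
\]
Using $U^{-k-1}=2^{k+1}u^{k+1}/(u^2+1)^{k+1}$ and expanding the right-hand sides as formal power series in $u$, I would equate coefficients of $u^n$; the leading order $u^{k+1}$ in $U^{-k-1}$ makes this an upper-triangular linear system, whose inversion produces explicit finite $\ZZ$-linear formulas for each $\cW_{-k}$, $\cW_{k+1}$, $\tilde{\cG}_{k+1}$ in terms of the Onsager generators. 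The first instances read $\psi(\tilde{\cG}_1)=-4G_1$, $\psi(\tilde{\cG}_2)=-2G_2$, $\psi(\cW_{-1})=\tfrac{1}{2}(A_{-1}+A_1)$, $\psi(\cW_2)=\tfrac{1}{2}(A_0+A_2)$.

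Since the identification above is a linear change of coordinates on the \emph{same} matrix $B(u)$, the defining relations~\eqref{qo1}--\eqref{qo4} of ${\cal A}$ — equivalent to the FRT relations for $B(u)$ by Proposition~\ref{prop20} — are translated into the FRT relations for $B^{\cO}$, which hold in $\cO$ by Theorem~\ref{th1}. Hence $\psi$ extends to a Lie algebra homomorphism. The compositions $\psi\circ\varphi$ and $\varphi\circ\psi$ fix the generating pairs $A_0,A_1$ and $\cW_0,\cW_1$ respectively, so both compositions are the identity. The main technical point I expect to need care with is making rigorous the coefficient-by-coefficient matching between formal series in two different variables ($u$ versus $U^{-1}$); this is controlled by the triangular expansion $U^{-k-1}=2^{k+1}u^{k+1}+O(u^{k+3})$, which guarantees that the linear change of basis between $\{G_n,A_n,A_{-n}\}$ and $\{\tilde{\cG}_{k+1},\cW_{-k},\cW_{k+1}\}$ is invertible over $\ZZ$.
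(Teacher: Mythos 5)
Your proof is correct, and its technical core --- matching the two FRT matrices (\ref{eq:BO}) and (\ref{eq:BW}) entry by entry and using the triangular expansion $U^{-k-1}=2^{k+1}u^{k+1}+O(u^{k+3})$ to pass between the $u$-expansion and the $U^{-1}$-expansion of the currents --- is exactly the paper's proof, which consists of two sentences (same FRT relation, same $r$-matrix, same power-series expansions of the entries) with the explicit formulas (\ref{eq:A+W})--(\ref{eq:Gt}) recorded only after the proof. What you add, and what the paper leaves implicit, is a clean construction of the forward map: since $\cW_0,\cW_1$ satisfy the Dolan--Grady relations, the universal property of the presentation (\ref{eq:DG}) of $\cO$ produces $\varphi$ for free, and your iterated-commutator induction ($\tilde{\cG}_{k+1}=[\cW_0,\cW_{k+1}]$ from (\ref{qo1}), then (\ref{qo2})--(\ref{qo3}) at $l=0$ to climb from level $k$ to level $k+1$) shows that $\cW_0,\cW_1$ generate ${\cal A}$. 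This is precisely what makes the final step --- checking $\psi\circ\varphi$ and $\varphi\circ\psi$ on the generating pairs only --- conclusive, so your write-up is actually more complete than the paper's. One cosmetic correction: the change of basis between $\{G_n,A_n,A_{-n}\}$ and $\{\tilde{\cG}_{k+1},\cW_{-k},\cW_{k+1}\}$ is triangular with diagonal entries $2^{k+1}$, hence invertible over $\ZZ[\tfrac{1}{2}]$ (or any field of characteristic $\neq 2$) rather than over $\ZZ$; indeed the inverse formulas (\ref{eq:W})--(\ref{eq:Gt}) carry denominators $2^{k}$. This does not affect the argument.
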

\begin{proof} By Theorem \ref{th1} and Proposition \ref{prop20}, the Onsager algebra $\cO$ and the algebra ${\cal A}$ have the same FRT presentation (\ref{eq:Al}) with the same r-matrix (\ref{r12basic}).
Then, the isomorphism between $\cO$ and ${\cal A}$ follows from the fact that the power series of the entries in (\ref{eq:BO}), (\ref{eq:BW}) have same expansions w.r.t. the formal variable. 
\end{proof}
The explicit relation between the generators $\{A_k,G_l|k\in\mathbb{Z},l\in \mathbb{Z}_{\geq 0}\}$ of the Onsager algebra $\cO$ and the generators $\{\cW_{-k},\cW_{k+1},\tilde{\cG}_{l+1}|k,l\in\mathbb{Z}_{\geq 0}\}$ of 
the algebra ${\cal A}$ is obtained as follows. By comparison between (\ref{eq:BO}) and (\ref{eq:BW}), we get:
\beqa
&&{\cal A}^+(u)\equiv\frac{1}{2}\big( -u \cW_+(u) + \cW_-(u) \big) \ ,\quad {\cal A}^-(u) \equiv \frac{1}{2}\big( u^{-1} \cW_+(u) - \cW_-(u) \big)\ ,\quad
 {\cal G}(u)\equiv -\frac{1}{8}\tilde{\cG}(u) \ \label{eq:isoAW}
\eeqa
with (\ref{eq:cu}) and (\ref{eq:cuW}). Then, one can prove that one has the following expansion around $u=0$, for $k\geq 0$:
\beqa
U^{-k-1}=2\sum_{p=0}^\infty c_{p}^{2p+k} u^{2p+k+1} \quad \mbox{with}\quad c_{p}^k=(-1)^p2^{k-2p}\frac{(k-p)!}{(p)!(k-2p)!}\ . \nonumber
\eeqa
By direct comparison of the l.h.s and r.h.s in (\ref{eq:isoAW}) it follows, for $k\geq 0$,
\beqa
A_{k+1}&=&  \sum_{p=0}^{\left[\frac{k}{2}\right]}  c_p^{k} \cW_{k-2p+1}- \sum_{p=0}^{\left[\frac{k-1}{2}\right]}  c_p^{k-1} \cW_{-k+2p+1}  \ ,\label{eq:A+W}\\
A_{-k}&=&  \sum_{p=0}^{\left[\frac{k}{2}\right]}  c_p^{k} \cW_{2p-k}- \sum_{p=0}^{\left[\frac{k-1}{2}\right]}  c_p^{k-1} \cW_{k-2p}  \ ,\label{eq:A-W}\\
G_{k+1}&=&        - \frac{1}{4} \sum_{p=0}^{\left[\frac{k}{2}\right]}  c_p^{k} \tilde{\cG}_{k-2p+1}\ .\label{eq:GGt}
\eeqa
Conversely, one has:
\beqa
\cW_{-k}&=& \frac{1}{2^k} \sum_{p=0}^{k}  \frac{k!}{p!(k-p)!}A_{k-2p}\ ,\quad\cW_{k+1}= \frac{1}{2^k} \sum_{p=0}^{k}  \frac{k!}{p!(k-p)!}A_{k+1-2p}\ ,\label{eq:W}\\
\tilde{\cG}_{k+1}&=& \frac{1}{2^{k-2}} \sum_{p=0}^{k}  \frac{k!}{p!(k-p)!}G_{2p-k-1} \ .\label{eq:Gt}
\eeqa
Here $[n]$ is the integer part of $n$ (with the convention $[-1/2]=-1$). For small values of $k$, explicit relations between the first few elements are reported in Appendix A. \vspace{1mm}

According to Theorem \ref{th3}, (\ref{eq:W}), (\ref{eq:Gt}) and (\ref{qo4}), the following three lemmas are easily shown.
\begin{lem} The following subsets form a basis for the same subspace of $\cO$:
\beqa
&(i)& A_0,  \quad  A_1+A_{-1}, \quad  A_2+A_{-2},   \quad A_3+A_{-3}, \cdots\nonumber\\
&(ii)& \cW_0, \quad \cW_{-1}, \quad  \cW_{-2},  \quad \cW_{-3}, \cdots \nonumber
\eeqa
\end{lem}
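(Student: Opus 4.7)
The plan is to use the explicit change-of-basis formulas (\ref{eq:W}) and (\ref{eq:A-W}) from the preceding subsection and exploit the binomial symmetry $\binom{k}{p}=\binom{k}{k-p}$ to show that the transition between the two sets is block-triangular with nonzero diagonal entries.

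First I would show that set (ii) is contained in the linear span of set (i). Starting from (\ref{eq:W}), I pair each index $p$ in the sum with $k-p$; since $k-2(k-p)=-(k-2p)$ and $\binom{k}{p}=\binom{k}{k-p}$, the coefficient of $A_{j}$ in $\cW_{-k}$ equals the coefficient of $A_{-j}$ for every $j$. Hence $\cW_{-k}$ can be rewritten as a linear combination of $A_{0}$ (appearing only when $k$ is even) and of the pairs $A_{j}+A_{-j}$ with $1\leq j\leq k$ and $j\equiv k\pmod{2}$. In particular, the coefficient of $A_{k}+A_{-k}$ (or of $A_{0}$ when $k=0$) equals $1/2^{k}\neq 0$.

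Next I would observe that the transition matrix just obtained splits as a direct sum of two blocks according to the parity of the index $k$, and on each block it is lower triangular (when ordered by increasing index) with nonzero diagonal entries $1/2^{k}$. Consequently it is invertible on every finite truncation, and one can solve recursively for $A_{0}$ and each $A_{n}+A_{-n}$ as a finite linear combination of $\cW_{0},\cW_{-1},\ldots,\cW_{-n}$; this yields the reverse containment and is also consistent with the explicit inversion (\ref{eq:A-W}).

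Finally, for linear independence: set (i) is linearly independent because $\{A_{n}\mid n\in\ZZ\}$ is part of a basis of $\cO$ by the Remark following Definition \ref{def:OA}, so $A_0$ and the symmetric combinations $A_{n}+A_{-n}$ involve pairwise disjoint basis elements. Linear independence of set (ii) then follows from the invertibility of the transition matrix established above, so both sets are bases for the same subspace. The only mild obstacle is the bookkeeping of parities and indices, but the involution $p\leftrightarrow k-p$ makes the argument uniform.
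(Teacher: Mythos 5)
Your proof is correct and follows essentially the same route as the paper, which simply asserts that the three lemmas follow from Theorem \ref{th3} and the explicit change-of-basis formulas (\ref{eq:W}), (\ref{eq:Gt}); your use of the binomial symmetry in (\ref{eq:W}) to exhibit a parity-block-triangular transition matrix with nonzero diagonal entries $1/2^{k}$ is just the detailed version of what the paper leaves as "easily shown". No gaps.
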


\begin{lem} The following subsets form a basis for the same subspace of $\cO$:
\beqa
&(i)& A_1,\quad    A_2+A_0, \quad   A_3+A_{-1}, \quad  A_4+A_{-2}, \cdots\nonumber\\
&(ii)& \cW_1, \quad \cW_2, \quad \cW_3, \quad\cW_4, \cdots \nonumber
\eeqa
\end{lem}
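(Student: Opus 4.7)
The plan is to derive the lemma directly from the explicit inversion formula \eqref{eq:W}. What needs to be shown is that the transition from $(\cW_1,\cW_2,\ldots)$ to the ordered list $(A_1,\,A_2+A_0,\,A_3+A_{-1},\ldots)$ is upper triangular with nonzero diagonal entries. Once this is done, both sets form bases of the same subspace, since set~(i) is manifestly linearly independent in $\cO$ (its entries are integer combinations of distinct basis vectors $A_n$).

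The main step is to rewrite
\begin{equation*}
\cW_{k+1}=\frac{1}{2^k}\sum_{p=0}^{k}\binom{k}{p}A_{k+1-2p}
\end{equation*}
by pairing the summation indices $p$ with $k-p$. Because $\binom{k}{p}=\binom{k}{k-p}$ and the exponents $k+1-2p$ and $2p-k+1$ are symmetric about $1$, the paired contributions combine into $\binom{k}{p}(A_{1+j}+A_{1-j})$ with $j=k-2p$, together with an unpaired middle term $\frac{1}{2^k}\binom{k}{k/2}A_1$ when $k$ is even. This realises $\cW_{k+1}$ as a linear combination of $A_1,\,A_2+A_0,\ldots,A_{k+1}+A_{1-k}$, i.e.\ of the first $k+1$ elements of set~(i).

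Triangularity is then visible: the ``leading'' basis vector $A_{k+1}+A_{1-k}$ (or $A_1$ when $k=0$) receives contributions only from the extremal indices $p=0$ and $p=k$, yielding the nonzero coefficient $1/2^k$. Hence the transition matrix is upper triangular with strictly nonzero diagonal and is invertible at every finite truncation, so the two sets span the same subspace of $\cO$; linear independence of set~(ii) follows from invertibility of this transition together with linear independence of set~(i). I do not anticipate any genuine obstacle beyond the mild parity bookkeeping needed to handle the unpaired middle term when $k$ is even.
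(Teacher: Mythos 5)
Your proposal is correct and follows essentially the same route as the paper, which simply invokes the explicit change-of-basis formula \eqref{eq:W} and notes that the lemma is "easily shown" from it. Your pairing of $p$ with $k-p$ to exhibit $\cW_{k+1}$ as a triangular combination of $A_1,\,A_2+A_0,\ldots,A_{k+1}+A_{1-k}$ with leading coefficient $1/2^k$ is exactly the bookkeeping the authors leave implicit.
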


\begin{lem} The following subsets form a basis for the same subspace of $\cO$:
\beqa
&(i)& G_1, \quad G_2, \quad G_3, \quad G_4,\cdots\nonumber\\
&(ii)&  \tilde{\cG}_1, \quad\tilde{\cG}_2,\quad \tilde{\cG}_3, \quad  \tilde{\cG}_4, \cdots \nonumber
\eeqa
\end{lem}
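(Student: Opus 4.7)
The plan is to prove both inclusions of spans and then deduce linear independence of the $\tilde{\cG}$--family from a triangularity argument. Throughout we use the explicit change-of-basis formulas (\ref{eq:GGt}) and (\ref{eq:Gt}), which were already derived in the excerpt from the identification of the two FRT presentations via Theorem~\ref{th3}; together with the property $G_{-n}=-G_n$ (and $G_0=0$) recalled after Definition~\ref{def:OA}.

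First, I would argue that the two families span the same subspace. By (\ref{eq:GGt}), each $G_{k+1}$ is a linear combination of $\tilde{\cG}_{k-2p+1}$ for $0\leq p\leq [k/2]$, so $G_{k+1}\in\mathrm{span}\{\tilde{\cG}_{n}\mid n\geq 1\}$ for all $k\geq 0$. Conversely, by (\ref{eq:Gt}) one has
\begin{equation*}
\tilde{\cG}_{k+1}=\frac{1}{2^{k-2}}\sum_{p=0}^{k}\binom{k}{p}G_{2p-k-1},
\end{equation*}
and after using $G_{-n}=-G_n$ and $G_0=0$ this expresses $\tilde{\cG}_{k+1}$ as a linear combination of $G_1,G_3,\ldots,G_{k+1}$ (if $k$ is even) or $G_2,G_4,\ldots,G_{k+1}$ (if $k$ is odd). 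Hence $\tilde{\cG}_{k+1}\in\mathrm{span}\{G_n\mid n\geq 1\}$ as well, and the two spans coincide.

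Next, to see that $\{\tilde{\cG}_{n}\mid n\geq 1\}$ is linearly independent, I would read off the leading coefficient in the above expansion: the $p=0$ term contributes $G_{-k-1}=-G_{k+1}$ with coefficient $-1/2^{k-2}\neq 0$, and no larger--index $G_m$ (with $m>k+1$) appears. Thus, after grouping indices by parity, the transition matrix from $\{G_n\}_{n\geq 1}$ to $\{\tilde{\cG}_{n}\}_{n\geq 1}$ is block--triangular with nonvanishing diagonal entries. Combined with the linear independence of $\{G_n\mid n\geq 1\}$ (stated in the Remark after Definition~\ref{def:OA}), this gives linear independence of $\{\tilde{\cG}_{n}\mid n\geq 1\}$; hence both families are bases of the same subspace of $\cO$.

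The only substantive point is the triangularity step: one has to be careful to track the parity and to identify which term in (\ref{eq:Gt}) produces the ``leading'' $G_{k+1}$ after applying $G_{-n}=-G_n$. Once this bookkeeping is in place, the remainder of the argument is purely linear--algebraic and parallels the proofs of the two preceding lemmas.
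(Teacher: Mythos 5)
Your proof is correct and follows the same route the paper intends: the paper simply asserts the lemma is "easily shown" from Theorem \ref{th3} together with the explicit change-of-basis formulas (\ref{eq:GGt}) and (\ref{eq:Gt}), and your argument is precisely that computation carried out, including the correct observation that in (\ref{eq:Gt}) only the $p=0$ term produces $G_{-k-1}=-G_{k+1}$ with nonzero coefficient $-1/2^{k-2}$, giving the (parity-blocked) triangularity needed for linear independence.
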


\subsection{Automorphisms of the algbra $\cal A$}
In view of the isomorphism between $\cO$ and $\cal A$,  the action of the automorphisms $\tau_0,\tau_1,\Phi$  introduced  in Proposition \ref{prop:aut}  is now described  in the alternative presentation $\cal A$. Inverting the correspondence (\ref{eq:isoAW}), one has:
\beqa
{\cal W}_+(u)&\equiv&\frac{2}{(u^{-1}-u)}\big( {\cal A}^+(u) + {\cal A}^-(u) \big) \ ,\quad {\cal W}_-(u)\equiv\frac{2}{(u^{-1}-u)}\big( u^{-1}{\cal A}^+(u) + u{\cal A}^-(u) \big) \ ,\label{eq:isoAWinv}\\
 \tilde{\cG}(u)&\equiv& -8{\cG}(u) \ .\nonumber
\eeqa
Using (\ref{eq:tau-cu}), it yields to:
\beqa
\tau_0({\cal W}_+(u)) &=& {\cal W}_+(u)\ , \quad \tau_0({\cal W}_-(u)) = 2U{\cal W}_+(u) - {\cal W}_-(u) - 2\cW_0\ ,\nonumber\\
\tau_1({\cal W}_-(u)) &=& {\cal W}_-(u)\ , \quad \tau_1({\cal W}_+(u)) = 2U{\cal W}_-(u) - {\cal W}_+(u) - 2\cW_1\ ,\nonumber\\
\tau_0(\tilde{{\cal G}}(u))&=& \tau_1(\tilde{{\cal G}}(u)) = -\tilde{{\cal G}}(u)\ .\nonumber 
\eeqa
Using (\ref{eq:cuW}), it follows:
\begin{prop} The action of the automorphisms $\tau_0,\tau_1$ on the elements of ${\cal A}$ is such that:
\beqa
\tau_0({\cal W}_{-k}) &=& {\cal W}_{-k}\ , \quad \tau_0({\cal W}_{k+1}) = 2{\cal W}_{-k-1}- {\cal W}_{k+1}\ ,\label{t0Wk}\\
\tau_1({\cal W}_{k+1}) &=& {\cal W}_{k+1}\ , \quad \tau_1({\cal W}_{-k}) = 2{\cal W}_{k+2}- {\cal W}_{-k}\ ,\label{t1Wk}\\
\tau_0(\tilde{\cal G}_{k+1}) &=& \tau_1(\tilde{\cal G}_{k+1}) =- \tilde{\cal G}_{k+1}\ .\label{t0Gk}
\eeqa
\end{prop}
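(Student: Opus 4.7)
The plan is to derive the proposition by expanding the current-level identities already stated in the paragraph immediately above it, and then matching coefficients of $U^{-k-1}$ on both sides. The preceding display gives
\begin{align*}
\tau_0(\cW_+(u)) &= \cW_+(u), & \tau_0(\cW_-(u)) &= 2U\cW_+(u) - \cW_-(u) - 2\cW_0, \\
\tau_1(\cW_-(u)) &= \cW_-(u), & \tau_1(\cW_+(u)) &= 2U\cW_-(u) - \cW_+(u) - 2\cW_1, \\
\tau_0(\tilde{\cG}(u)) &= -\tilde{\cG}(u), & \tau_1(\tilde{\cG}(u)) &= -\tilde{\cG}(u),
\end{align*}
and these, together with the mode expansions \eqref{eq:cuW}, are all that is needed.

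First, substituting \eqref{eq:cuW} into $\tau_0(\cW_+(u)) = \cW_+(u)$ and identifying coefficients of $U^{-k-1}$ immediately yields $\tau_0(\cW_{-k}) = \cW_{-k}$. The analogous step for $\tau_1(\cW_-(u)) = \cW_-(u)$ gives $\tau_1(\cW_{k+1}) = \cW_{k+1}$. For the third line of the proposition, the identities $\tau_0(\tilde{\cG}(u)) = \tau_1(\tilde{\cG}(u)) = -\tilde{\cG}(u)$ are already diagonal in the basis $\{U^{-k-1}\}_{k\geq 0}$, so matching modes yields $\tau_0(\tilde{\cG}_{k+1}) = \tau_1(\tilde{\cG}_{k+1}) = -\tilde{\cG}_{k+1}$.

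The only step that requires any care is the expansion of $2U\cW_+(u)$. Writing $\cW_+(u) = \sum_{k\geq 0}\cW_{-k}U^{-k-1}$, the product
\[
2U\cW_+(u) = \sum_{k\geq 0} 2\cW_{-k} U^{-k} \;=\; 2\cW_0 + \sum_{k\geq 0} 2\cW_{-k-1} U^{-k-1}
\]
produces a constant term $2\cW_0$ that is precisely cancelled by the $-2\cW_0$ appearing in the current identity, so that $\tau_0(\cW_-(u))$ remains (as it must) a series with no constant term. Collecting coefficients of $U^{-k-1}$ on the right-hand side of $\tau_0(\cW_-(u)) = 2U\cW_+(u) - \cW_-(u) - 2\cW_0$ then gives
\[
\tau_0(\cW_{k+1}) = 2\cW_{-k-1} - \cW_{k+1},\qquad k\geq 0,
\]
which is \eqref{t0Wk}. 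An identical index-shift argument applied to $\tau_1(\cW_+(u)) = 2U\cW_-(u) - \cW_+(u) - 2\cW_1$ yields $\tau_1(\cW_{-k}) = 2\cW_{k+2} - \cW_{-k}$, completing \eqref{t1Wk} and the proof.

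The only obstacle is bookkeeping of the shift in indices produced by multiplication by $2U$; everything else is a direct coefficient extraction from identities that have already been established.
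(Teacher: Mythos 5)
Your proof is correct and is exactly the paper's argument: the paper derives the proposition by the single phrase ``Using (\ref{eq:cuW}), it follows'' from the same current identities, i.e.\ by expanding in powers of $U^{-k-1}$ and matching coefficients. Your explicit check that the constant term $2\cW_0$ (resp.\ $2\cW_1$) produced by multiplication by $2U$ cancels against the subtracted $-2\cW_0$ (resp.\ $-2\cW_1$) is the only nontrivial bookkeeping, and you have done it correctly.
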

From (\ref{qo2}),  (\ref{qo3}) note that 
\beqa
{\cal W}_{-k-1} =\frac{1}{16}[\tilde{\cG}_{k+1},\cW_0] + \cW_{k+1} \ ,\quad  {\cal W}_{k+2} =\frac{1}{16}[\cW_1,\tilde{\cG}_{k+1}] + \cW_{-k} \ .\nonumber
\eeqa
Inserting $\tilde{\cG}_{k+1}= [\cW_0,\cW_{k+1}]$ in the first equation above, from (\ref{t0Wk}), (\ref{t0Gk}) one recovers the classical ($q=1$) analogs of the formulae given in Proposition 7.4 of \cite{T17}. Similarly,  $\tilde{\cG}_{k+1}= [\cW_{-k},\cW_{1}]$ can be inserted into the second equation above  in order to rewrite (\ref{t1Wk}).\vspace{1mm}

Combining above relations, one gets:
\beqa
(\tau_0 + \tau_1)({\cal W}_+(u)) &=& 2U {\cal W}_-(u) - 2\cW_1 \  , \quad (\tau_0 + \tau_1)({\cal W}_-(u)) = 2U {\cal W}_+(u) - 2\cW_0 \ .\nonumber
\eeqa
From the expansions (\ref{eq:cuW}), it follows (note that $\cW_1=\tau_1\Phi(\cW_0)$):
\begin{prop}\label{prop32} In the algebra $\cal A$, one has:
\beqa
\cW_{-k} = \left(\frac{\tau_0\Phi + \tau_1\Phi}{2}\right)^k (\cW_0)\ ,\quad \cW_{k+1} = \left(\frac{\tau_0\Phi + \tau_1\Phi}{2}\right)^k (\cW_1)\ \quad \mbox{and}\quad \tilde{\cG}_{k+1}=\big[\cW_0,  \left(\frac{\tau_0\Phi + \tau_1\Phi}{2}\right)^k (\cW_1)\big] \ . \label{eq:idW}\nonumber
\eeqa
\end{prop}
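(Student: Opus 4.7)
The plan is to combine the current-level identities derived in the paragraph preceding the proposition with the action of $\Phi$ on the elements $\cW_{-k},\cW_{k+1}$, and then proceed by induction on $k$.

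First I would extract coefficient-wise recursions from
$$(\tau_0+\tau_1)(\cW_+(u))=2U\cW_-(u)-2\cW_1,\qquad (\tau_0+\tau_1)(\cW_-(u))=2U\cW_+(u)-2\cW_0.$$
Since $2U\cdot U^{-k-1}=2U^{-k}$, the multiplication by $2U$ shifts the expansion by one and produces a constant term that is cancelled exactly by $-2\cW_{0/1}$; matching the coefficient of $U^{-k-1}$ on each side then yields the compact recursions
$$\tfrac{1}{2}(\tau_0+\tau_1)(\cW_{k+1})=\cW_{-k-1},\qquad \tfrac{1}{2}(\tau_0+\tau_1)(\cW_{-k})=\cW_{k+2}.$$

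Second, I would establish the symmetry $\Phi(\cW_{-k})=\cW_{k+1}$ (equivalently, by $\Phi^2=\id$, $\Phi(\cW_{k+1})=\cW_{-k}$). From the remark $\Phi(A_{-n})=A_{n+1}$ together with $\Phi(A_0)=A_1$ and $\Phi^2=\id$, one has $\Phi(A_j)=A_{1-j}$ for every $j\in\ZZ$. Applying this term-by-term to the expression $\cW_{-k}=2^{-k}\sum_{p=0}^{k}\binom{k}{p}A_{k-2p}$ from \eqref{eq:W} and then relabelling $p\mapsto k-p$ in the binomial sum gives precisely $\cW_{k+1}$.

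With these two ingredients in hand, the induction on $k$ is straightforward. The base case $k=0$ is tautological. Assuming $\cW_{-k}=\bigl(\tfrac{1}{2}(\tau_0\Phi+\tau_1\Phi)\bigr)^{k}(\cW_0)$, I compute
$$\Bigl(\tfrac{\tau_0\Phi+\tau_1\Phi}{2}\Bigr)^{k+1}(\cW_0)=\tfrac{1}{2}(\tau_0+\tau_1)\bigl(\Phi(\cW_{-k})\bigr)=\tfrac{1}{2}(\tau_0+\tau_1)(\cW_{k+1})=\cW_{-k-1},$$
and the parallel computation starting from $\cW_1$ produces $\cW_{k+2}$. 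The third identity is a one-line consequence of the $l=0$ case of \eqref{qo1}, namely $\tilde{\cG}_{k+1}=[\cW_0,\cW_{k+1}]$, into which the formula just established for $\cW_{k+1}$ is inserted. The only step I expect to require genuine content rather than formal manipulation is the symmetry $\Phi(\cW_{-k})=\cW_{k+1}$; once it is in place, the rest of the argument is bookkeeping through the recursions above.
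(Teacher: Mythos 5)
Your proposal is correct and follows exactly the route the paper leaves implicit: it reads off the coefficient-wise recursions $\tfrac12(\tau_0+\tau_1)(\cW_{k+1})=\cW_{-k-1}$ and $\tfrac12(\tau_0+\tau_1)(\cW_{-k})=\cW_{k+2}$ from the displayed current identities (equivalently, by adding the relations \eqref{t0Wk}--\eqref{t1Wk}), combines them with the symmetry $\Phi(\cW_{-k})=\cW_{k+1}$ (stated as a remark in the paper, and which you rightly identify as the only step needing real verification, done correctly via $\Phi(A_j)=A_{1-j}$ in \eqref{eq:W}), and closes the induction, with the third identity coming from \eqref{qo1} at $l=0$. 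Nothing to object to.
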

\begin{rem}   $\Phi(\cW_{-k})=\cW_{k+1}$, $\Phi(\tilde{\cG}_{k+1})=-\tilde{\cG}_{k+1}$.
\end{rem}
Note that the polynomial expressions for the elements $\{\cW_{-k},\cW_{k+1},\tilde{\cG}_{k+1}\}$ computed here using the action of the automorphisms can be viewed as the classical ($q=1$) analogs of the expressions computed in \cite{BB5}, where the elements of the algebra ${\cal A}_q$ are derived as polynomials of the fundamental generators $\cW_0,\cW_1$ satisfying the $q-$deformed version of (\ref{eq:DGW}).
\vspace{1mm}

\subsection{Quotients of the Lie algebra ${\cal A}$ and of the Onsager algebra}
 By analogy with the analysis of the previous section, we now introduce certain quotients of the algebra ${\cal A}$. These quotients can be viewed as the classical analogs of the quotients of algebra ${\cal A}_q$ considered in \cite[eq. 11]{BK2}.
\begin{defn}  Let $\{\beta_n|n=0,...,N\}$ be non-zero scalars with $N$ any non-zero positive integer.
The algebra $\overline{\cal A}_N$ is defined as the quotient of the algebra ${\cal A}$ by the relations
\begin{eqnarray}
&&\sum_{k=0}^{N}\beta_k \cW_{-k}=0 
\quad\text{and}\qquad \sum_{k=0}^{N}\beta_k \cW_{k+1}=0
 \ .\label{davW}
\end{eqnarray}
\end{defn}

According to Proposition \ref{prop32}, introduce the operator:
\beqa
S'_N = \sum_{n=0}^{N}\beta_n (\taub_0\Phib + \taub_1\Phib)^n\ .
\eeqa
Then, eqs. (\ref{davW}) simply read $S'_N(\cW_0)=0$ and $S'_N(\cW_1)=0$, respectively. 
Furthermore, one has $[(\taub_0\Phib + \taub_1\Phib)^p,S'_N]=0$ for any $p\in {\mathbb Z}$. It follows:
\begin{rem} The relations (\ref{davW}) imply:
\beqa
 \sum_{k=0}^{N}\beta_k \cW_{-k-p}=0, 
\quad \sum_{k=0}^{N}\beta_k \cW_{k+1+p}=0,
\quad \sum_{k=0}^{N}\beta_k \tilde{\cG}_{k+1+p}=0\qquad \mbox{for any} \quad p\in {\mathbb Z_{\geq 0}}.\label{davW2b}
\eeqa
The algebra $\overline{\cal A}_N$ has $3N$ generators $\{\cW_{-k}, \cW_{k+1}, \tilde{\cG}_{k+1}|k=0,1,...,N-1\}$.
\end{rem}
Note that above relations (\ref{davW2b}) can be derived using the commutation relations (\ref{qo1})-(\ref{qo3}).\vspace{1mm}

\begin{thm} The algebra $\overline{\cal A}_N$ is isomorphic to the quotient of the Onsager algebra $\overline{\cO}_N$ with the identification
\begin{eqnarray}
&& \beta_{2k}= \frac{2^{2k}}{(2k)!}\sum_{p=k}^{\left[\frac{N}{2}\right]} 2p(-1)^{p-k}\frac{(k+p-1)!}{(p-k)!}   \alpha_{2p} \ , \label{eq:beta-alpha1}\\
 &&\beta_{2k+1}= \frac{2^{2k+1}}{(2k+1)!}\sum_{p=k+1}^{\left[\frac{N+1}{2}\right]} (2p-1)(-1)^{p-k-1}\frac{(k+p-1)!}{(p-k-1)!}   \alpha_{2p-1}\ .  \label{eq:beta-alpha2}
\end{eqnarray}
\end{thm}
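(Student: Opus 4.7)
The plan is to invoke Theorem \ref{th3} to identify $\cO$ with $\cal A$ (with $A_0 \leftrightarrow \cW_0$ and $A_1 \leftrightarrow \cW_1$) and to show that, under this identification, the two relations (\ref{dav}) that define $\overline\cO_N$ coincide, as elements of $\cO \cong \cal A$, with the two relations (\ref{davW}) that define $\overline{\cal A}_N$, once the $\beta_k$ are taken to be those of (\ref{eq:beta-alpha1})-(\ref{eq:beta-alpha2}). Set $T := \tau_1\Phi$, so that $T^{-1} = \tau_0\Phi$ by Remark \ref{R1}, and introduce the formal symbol $y := (T+T^{-1})/2$, i.e.\ the operator $(\tau_0\Phi+\tau_1\Phi)/2$ acting on $\cO$. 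Proposition \ref{prop:poly} then gives $A_m = T^m(A_0)$ for all $m\in\ZZ$, while Proposition \ref{prop32} gives $\cW_{-k} = y^k(A_0)$ and $\cW_{k+1} = y^k(A_1)$.

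Using the symmetry $\alpha_{-n} = \alpha_n$, the first relation in (\ref{dav}) recasts as
\begin{equation*}
\sum_{n=-N}^N \alpha_n A_{-n} \;=\; \Big(\alpha_0 + \sum_{n=1}^N \alpha_n(T^n + T^{-n})\Big)(A_0) \;=\; \tilde Q(y)(A_0),
\end{equation*}
where $\tilde Q(y) := \alpha_0 + 2\sum_{n=1}^N \alpha_n T_n(y)$ and $T_n$ denotes the Chebyshev polynomial of the first kind (via the identity $T^n + T^{-n} = 2 T_n((T+T^{-1})/2)$). On the other side, the first relation in (\ref{davW}) is simply $P(y)(A_0) = 0$ with $P(y) := \sum_{k=0}^N\beta_k y^k$. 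The whole statement then reduces to the polynomial identity $P(y) = \tilde Q(y)$. Substituting the classical expansion
\begin{equation*}
T_n(y) = \frac{n}{2}\sum_{m=0}^{\lfloor n/2\rfloor}(-1)^m\,\frac{(n-m-1)!}{m!(n-2m)!}\,(2y)^{n-2m}\qquad(n\geq 1)
\end{equation*}
into $\tilde Q(y)$ and collecting the coefficients of $y^{2k}$ and of $y^{2k+1}$ separately (the two parities decouple since $T_n$ has the same parity as $n$), the change of summation variable $p = m+k$ coming from $n = j+2m$ with $j\in\{2k,2k+1\}$ produces precisely the expressions (\ref{eq:beta-alpha1}) and (\ref{eq:beta-alpha2}).

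The second relation of (\ref{dav}) then follows at once by applying $T$ to the identity just established: because $T$ commutes with $y$, one obtains
$\sum_n \alpha_n A_{n+1} = T\cdot\tilde Q(y)(A_0) = \tilde Q(y)(A_1) = P(y)(\cW_1) = \sum_k\beta_k\cW_{k+1}.$
The auxiliary relations (\ref{davW2b}) on the $\tilde\cG_{k+1}$'s are then automatic, being obtained by bracketing the relation $\sum_k\beta_k\cW_{k+1}=0$ with $\cW_0$ and using (\ref{qo1}). I expect the main technical hurdle to be the parity-separated combinatorial identification of the coefficients of $y^j$ in $\tilde Q(y)$ with (\ref{eq:beta-alpha1})-(\ref{eq:beta-alpha2}); in particular, the constant term ($k = p = 0$) is delicate, since the stated formula contains the indeterminate factor $0\cdot(-1)!$ that must be interpreted (through a Gamma-function limit) so as to recover the $\alpha_0$ contribution to $\tilde Q(0)$.
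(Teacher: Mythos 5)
Your proposal is correct, and it reaches the coefficient formulas by a genuinely different (and arguably more structural) route than the paper. The paper's proof goes through the explicit basis change: it inserts the expansions \eqref{eq:A+W}--\eqref{eq:A-W} of the $A_m$ in terms of the $\cW$'s (whose coefficients $c_p^k$ are Chebyshev coefficients of the second kind, coming from the generating-function identity for $U^{-k-1}$) directly into \eqref{dav} and collects terms to obtain \eqref{davW} with \eqref{eq:beta-alpha1}--\eqref{eq:beta-alpha2}. You instead exploit the symmetrization $\alpha_{-n}=\alpha_n$ together with Propositions \ref{prop:poly} and \ref{prop32} to rewrite both sides as a single polynomial in the commuting linear operator $y=(\tau_0\Phi+\tau_1\Phi)/2$ acting on $A_0$, so that the whole theorem collapses to the identity $\sum_k\beta_k y^k=\alpha_0+2\sum_{n\geq 1}\alpha_n T_n(y)$ with $T_n$ the Chebyshev polynomial of the \emph{first} kind; the second relation of \eqref{dav} then comes for free by applying $T=\tau_1\Phi$, whereas the paper treats it by the parallel substitution. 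Your route buys a cleaner one-line reduction and makes the Chebyshev structure of \eqref{eq:beta-alpha1}--\eqref{eq:beta-alpha2} transparent; the paper's route has the advantage of not requiring the automorphism machinery, only the isomorphism of Theorem \ref{th3}. One point you should nail down rather than leave to a ``Gamma-function limit'': at $p=k=0$ the factor $2p\,(k+p-1)!/(p-k)!$ must be read as $(2k)!\big|_{k=0}=1$ (i.e.\ the $p=k$ term of \eqref{eq:beta-alpha1} is always $2^{2k}\alpha_{2k}$), so that $\beta_0$ picks up $\alpha_0$ with coefficient $1$ --- consistent with $A_0=\cW_0$ and with the check $\beta_0=\alpha_0-2\alpha_2$ at $N=2$; a naive limit $p\to 0$ at fixed $k=0$ would instead give $2p\,\Gamma(p)/\Gamma(p+1)\to 2$ and hence the wrong constant term. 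This is a defect of the displayed formula's edge case rather than of your argument, but since you flagged it you should resolve it explicitly.
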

\begin{proof} By Theorem \ref{th3},  $\cO$ and $\cal A$ are isomorphic, and the isomorphism is given by (\ref{eq:A+W}), (\ref{eq:A-W}), (\ref{eq:GGt}). 
To show that $\overline{\cal A}_N$ and $\overline{\cO}_N$ are isomorphic, it is necessary and sufficient to show that (\ref{dav}) and (\ref{davW}) are equivalent if
relations (\ref{eq:beta-alpha1})-(\ref{eq:beta-alpha2}) hold. By inserting (\ref{eq:A+W}) and (\ref{eq:A-W}) in (\ref{dav}), one gets equivalently (\ref{davW}) by using 
(\ref{eq:beta-alpha1})-(\ref{eq:beta-alpha2}). 
\end{proof}

The corresponding class of solutions of the non-standard Yang-Baxter algebra (\ref{eq:Al}) is now considered.
\begin{prop}\label{prop3} Let $\{\beta_p|p=0,...,N-1\}$ be non-zero scalars with $N\in{\mathbb N}_{\geq 1}$. Then,
 the non-standard classical Yang-Baxter algebra (\ref{eq:Al}) for the r-matrix (\ref{r12basic}) and 
\begin{equation}
 B^{(N)}(u)= \frac{1}{2\tilde{p}^{(N)}(U)}
\begin{pmatrix}
   -\frac{1}{4}\ \tilde{\cG}^{(N)}(u)    & \ u^{-1} \cW_+^{(N)}(u) - \cW_-^{(N)}(u)  \\
  -u \cW_+^{(N)}(u) + \cW_-^{(N)}(u)  &       \frac{1}{4}\ \tilde{\cG}^{(N)}(u)   
\end{pmatrix} \ ,\quad \tilde{p}^{(N)}(U)=\sum_{p=0}^{N}\beta_p U^{p}\ ,
 \label{eq:BWfinite}
\end{equation}
 where
\begin{eqnarray}
&& \cW_+^{(N)}(u)  =\sum_{k= 0}^{N-1} \tilde f_k^{(N)}(U) \cW_{-k}\ ,\quad  \cW_-^{(N)}(u)  = \sum_{k= 0}^{N-1} \tilde f_k^{(N)}(U) \cW_{k+1}\ ,\quad  
\tilde{\cG}^{(N)}(u)  = \sum_{k= 0}^{N-1} \tilde f_k^{(N)}(U) \tilde{\cG}_{k+1}\; \label{eq:cufinite}
\end{eqnarray}
and 
\begin{eqnarray}
\tilde f_k^{(N)}(U)=\sum_{p=k+1}^{N}\beta_p U^{p-k-1}, \label{fqn}
\end{eqnarray}
%
provides an FRT presentation of the algebra $\overline{\cal A}_N$.
\end{prop}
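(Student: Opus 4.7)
The plan is to imitate the proof of Proposition \ref{prop1}. By Proposition \ref{prop20}, the matrix $B(u)$ of (\ref{eq:BW}) already satisfies (\ref{eq:Al}) in the full algebra $\cal A$, so in the quotient $\overline{\cal A}_N$ the same equation holds for the image of $B(u)$. The task is therefore to rewrite the three currents $\cW_+(u),\cW_-(u),\tilde{\cG}(u)$ of (\ref{eq:cuW}) as ratios whose numerators are finite sums over the $3N$ generators $\{\cW_{-k},\cW_{k+1},\tilde{\cG}_{k+1}\}_{0\le k\le N-1}$ and whose common denominator is $\tilde p^{(N)}(U)$.

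I will start with $\cW_+(u)=\sum_{k\ge 0}\cW_{-k}U^{-k-1}$ and compute the product $\tilde p^{(N)}(U)\cW_+(u)=\sum_{p=0}^{N}\sum_{k\ge 0}\beta_p\cW_{-k}U^{p-k-1}$. Reindexing by the exponent of $U$ separates the sum into two pieces: the coefficient of $U^{-m-1}$ for $m\ge 0$ is $\sum_{p=0}^{N}\beta_p\cW_{-p-m}$, which vanishes identically by the first identity of (\ref{davW2b}); the coefficient of $U^j$ for $0\le j\le N-1$ is $\sum_{k=0}^{N-j-1}\beta_{k+j+1}\cW_{-k}$. Swapping the two remaining sums and setting $p=k+j+1$ groups the terms as $\sum_{k=0}^{N-1}\cW_{-k}\sum_{p=k+1}^{N}\beta_p U^{p-k-1}=\sum_{k=0}^{N-1}\tilde f_k^{(N)}(U)\cW_{-k}=\cW_+^{(N)}(u)$. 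Hence $\cW_+(u)=\cW_+^{(N)}(u)/\tilde p^{(N)}(U)$ in $\overline{\cal A}_N$.

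The identical computation applied to $\cW_-(u)$ and $\tilde{\cG}(u)$, invoking the second and third identities of (\ref{davW2b}), respectively, yields $\cW_-(u)=\cW_-^{(N)}(u)/\tilde p^{(N)}(U)$ and $\tilde{\cG}(u)=\tilde{\cG}^{(N)}(u)/\tilde p^{(N)}(U)$. Plugging these into (\ref{eq:BW}) shows that the matrix $B(u)$ of Proposition \ref{prop20} coincides in $\overline{\cal A}_N$ with $B^{(N)}(u)$ of (\ref{eq:BWfinite}); the FRT relation (\ref{eq:Al}) is therefore inherited.

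To conclude that this is an FRT presentation of $\overline{\cal A}_N$ it remains to check that expanding (\ref{eq:Al}) for $B^{(N)}(u)$ reproduces exactly the defining relations of $\overline{\cal A}_N$, that is, (\ref{qo1})-(\ref{qo4}) restricted to $0\le k,l\le N-1$ together with the quotient relations (\ref{davW}). This follows from Proposition \ref{prop20}, since the poles at the roots of $\tilde p^{(N)}(U)$ contribute only the quotient identities and the regular part reproduces the restricted relations. The main (and only) obstacle is the combinatorial reindexing of the doubly-indexed series; one must be careful to separate strictly negative from nonnegative powers of $U$ so that (\ref{davW2b}) is applied to the right block, and then to recognise the surviving coefficients as the polynomials $\tilde f_k^{(N)}(U)$ of (\ref{fqn}).
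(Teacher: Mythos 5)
Your proof is correct and follows essentially the same route as the paper, which simply states that the argument of Proposition \ref{prop1} carries over with \eqref{eq:BO}, \eqref{dav} replaced by \eqref{eq:BW}, \eqref{davW}: multiply the current by the denominator polynomial, use the quotient relations \eqref{davW2b} to kill the negative powers of $U$, and reindex the surviving terms into the $\tilde f_k^{(N)}(U)$. Your version is in fact slightly more explicit than the paper's, but there is no substantive difference.
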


\begin{proof} The proof is similar to the one of Proposition \ref{prop1} by replacing the relations \eqref{eq:BO} and \eqref{dav} by  (\ref{eq:BW}) and (\ref{davW}).
\end{proof}

\begin{rem}  Note that (\ref{eq:BWfinite}) can be interpreted as the classical analog of the Sklyanin's operators constructed in \cite{BK2} satisfying the reflection algebra.
\end{rem}

\vspace{0.5cm}

\noindent{\bf Acknowledgements:} We thank S. Belliard for discussions, P. Terwilliger and A. Zhedanov for comments and suggestions.  P.B.  and N.C. are supported by C.N.R.S. N.C. thanks the IDP for hospitality, where part of this work has been done.
\vspace{0.2cm}

\begin{appendix}
\section{}
From  (\ref{eq:A+W})-(\ref{eq:GGt}), for $k=0,1,2$  one has:
\beqa
A_0&=& \cW_0\ ,\quad A_1=\cW_1\ ,\quad G_1=-\frac{1}{4}\tilde{\cG}_{1}\ ,\nonumber\\
A_{-1}&=& 2\cW_{-1}-\cW_1\ ,\quad A_{2}=2\cW_{2}-\cW_0\ ,\quad G_2=-\frac{1}{2}\tilde{\cG}_{2}\ ,\nonumber\\
A_{-2}&=& 4\cW_{-2}-\cW_0-2\cW_{2}\ ,\quad A_{3}=4\cW_{3}-\cW_1-2\cW_{-1}\ ,\quad G_3=-\tilde{\cG}_{3} + \frac{1}{4}\tilde{\cG}_{1}\ .\nonumber
\eeqa
Conversely, from (\ref{eq:W})-(\ref{eq:Gt}) for $k=1,2$  one has:
\beqa
 \cW_{-1}&=& \frac{A_1 + A_{-1}}{2}\ , \quad \cW_{2}= \frac{A_0 + A_{2}}{2}\ ,\quad \tilde{\cG}_{2}=-2G_2\ ,\nonumber \\
\cW_{-2}&=& \frac{A_2 + 2A_{0} + A_{-2}}{4}\ , \quad \cW_{2}= \frac{A_3 + 2A_{1} + A_{-1}}{4},\quad \tilde{\cG}_{3}=-G_3 -2G_1\ .\nonumber
\eeqa

\end{appendix}

\vspace{2mm}

 \end{document}